\newtheorem{assumption}{Assumption}
\newtheorem{lemma}{Lemma}
\newtheorem{theorem}{Theorem}
\newtheorem{remark}{Remark}
\newtheorem{cor}{Corollary}
\DeclarePairedDelimiter\floor{\lfloor}{\rfloor}
\def\BibTeX{{\rm B\kern-.05em{\sc i\kern-.025em b}\kern-.08em
    T\kern-.1667em\lower.7ex\hbox{E}\kern-.125emX}}
\begin{document}

\title{Order Determination for Tensor-valued Observations Using Data Augmentation
%{\footnotesize \textsuperscript{*}Note: Sub-titles are not captured in Xplore and should not be used}
}

\author{
 Una Radoji\v{c}i\'c\\
{Vienna University of Technology}\\
\textit{una.radojicic@tuwien.ac.at}
\And
{ Niko Lietz\'en}\thanks{The work of NL was supported by the Academy of Finland (Grant 321968).}\\
{University of Turku}\\
Turku, Finland \\
\textit{niko.lietzen@utu.fi}
\And
{Klaus Nordhausen}\\
{University of Jyväskylä}\\
\textit{klaus.k.nordhausen@jyu.fi}
\And
{Joni Virta}\thanks{The work of JV was supported by the Academy of Finland (Grant 335077).}\\
{University of Turku}\\
\textit{joni.virta@utu.fi}
}
\maketitle

\begin{abstract}
Tensor-valued data benefits greatly from dimension reduction as the reduction in size is exponential in the number of modes. To achieve maximal reduction without loss in information, our objective in this work is to give an automated procedure for the optimal selection of the reduced dimensionality. Our approach combines a recently proposed data augmentation procedure with the higher-order singular value decomposition (HOSVD) in a tensorially natural way. We give theoretical guidelines on how to choose the tuning parameters and further inspect their influence in a simulation study. As our primary result, we show that the procedure consistently estimates the true latent dimensions under a noisy tensor model, both at the population and sample levels. Additionally, we propose a bootstrap-based alternative to the augmentation estimator. Simulations are used to demonstrate the estimation accuracy of the two methods under various settings.
% Intuitively, where intuitively,  the main idea in both approaches is to supplement a scree plot with additional information extracted from the eigenvectors of a variation matrix.

%This paper is the extended and generalized version of work presented in~\cite{RadojicicLietzenNordhausenVirta2021}, where the order determination for matrix-variate observations using 2D$^2$PCA was considered. The generalization refers to extension of the main results to the general tensor case. Furthermore, in this work we also give the asymptotic properties of the proposed estimators and as well as the proofs of all the results from~\cite{RadojicicLietzenNordhausenVirta2021}, which have now also been generalized to general tensor setting.
\end{abstract}

\keywords{augmentation, HOSVD, order determination, dimension reduction, tensor data, scree plot}

\section{Introduction}
%\textcolor{blue}{I was thinking of motivating the method through general linear feature extraction as e.g. projection pursuit, which, while often tailored to extract some really nice data features, is at the same time difficult to do (computation-wise) if the dimension of the data is really big. Therefore, what we present here has two purposes: 1. if the stricter model holds (core scatters are all diagonal...), the method identifies the core tensor, but if not, then we do the dimension reduction, and up to rotation identify the core. Then, we can supply any further e.g. PP method with the reduced tensor.}

%\joni{10 pages is the limit, but apparently we can have a 4 page supplement for the proofs.}

\subsection{Tensors and image data}

Tensors offer a particularly convenient framework for the modelling of different types of image data \cite{aja2009tensors,jouni2021}. For example, a collection of gray-scale images can be viewed as a sample of second-order tensors (i.e., matrices) where the tensor elements correspond to intensities of the pixels. A set of colour images can be represented as a sample of third-order tensors where the third mode has dimensionality equal to 3 and collects the colour information through RGB-values of the pixels. Still higher-dimensional tensors are obtained if we sample multiple images per subject (possibly in the form of a video).

The total number of elements in a tensor grows exponentially in its order. Therefore, as soon as our images have even a reasonably large resolution, the resulting data set consists of a huge number of variables. Consider, for instance, the \textit{butterfly} data set\footnote{The data set is freely available at \url{https://www.kaggle.com/datasets/gpiosenka/butterfly-images40-species}} that will serve as our running example throughout the paper; see the top row of Figure~\ref{fig:example} for five particular images in the data set. For simplicity, we use in this paper a subsample of the full data, consisting of $n = 882$ RGB images of butterflies of various species %(75 in total)
with the resolution $224 \times 224$. As such, the number of variables per image is $150528$. However, as any two neighbouring pixels in an image are typically highly correlated, the ``signal dimension'' of the butterfly data, and image data in general, is most likely much smaller than the total number of observed variables.

%The running example in the paper is  and consisting of RGB images of butterfly from 75 different species. Among all butterfly images, we choose a sample of size $921$ to work with. A sample of those images, as well as the reconstructed counterparts using lower number of components is shown in Figure~\ref{fig:example}.
\begin{figure}[ht]
\centering
\subfloat{\includegraphics[width = 0.18\linewidth]{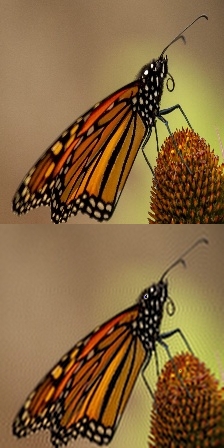}}
\subfloat{\includegraphics[width = 0.18\linewidth]{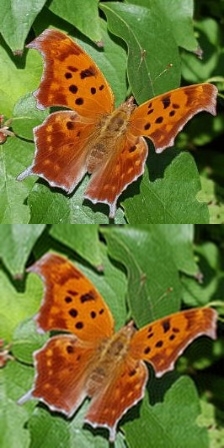}}
\subfloat{\includegraphics[width = 0.18\linewidth]{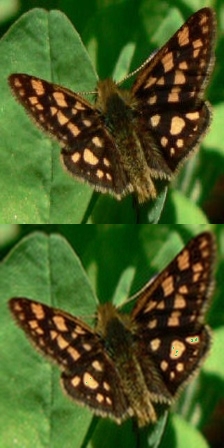}}
\subfloat{\includegraphics[width = 0.18\linewidth]{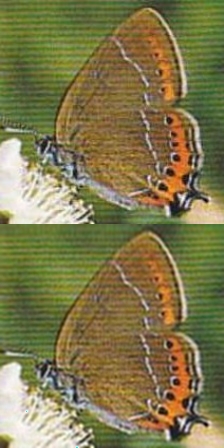}}
\subfloat{\includegraphics[width = 0.18\linewidth]{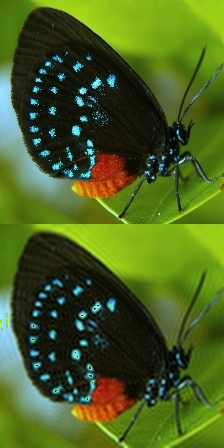}}
\caption{A collection of five images from the \textit{butterfly} data set (top row) and the corresponding reconstructed images using compressed, $103\times111\times3$-dimensional core tensors (bottom row).}
\label{fig:example}
\end{figure}

The standard approach to remove the redundant information and noise from the data is through low-rank decomposition where the images are approximated as products of low-rank tensors \cite{aja2009tensors,Inoue2016}. % \joni{(REF)}.
As with any dimension reduction method, a key problem in this procedure is the selection of the reduced rank/order/dimension (we use all three terms interchangeably in the sequel). We want to keep the order small not to incorporate noise in the decomposition but, at the same time, enough components should be included to ensure accurate capturing of the signal. As the main contribution of the current work, we develop a method for the automatic determination of the order, separately in each mode. Reconstructions of the five butterfly images based on the dimensionalities chosen by our method are shown in the bottom row of Figure \ref{fig:example}. For further details on this example, see Section \ref{sec:real_data_example}. We develop our method under a specific statistical model and show that asymptotically, when the number of images $n$ generated from the model grows without bounds, we are guaranteed to recover the true rank of the data.

Our main idea is based on the combination of a tensor decomposition known as \textit{higher-order singular valued decomposition} (HOSVD) \cite{Lathauwer2000} and a specific method of order determination known as \textit{predictor augmentation} \cite{LuoLi2021}. To set up the framework, the next two subsections briefly review the literature pertaining to these subjects and, afterwards, in Section \ref{sec:previous} we still highlight the contributions of this work in comparison to the existing literature. In the same section we also discuss the connection to the conference paper \cite{RadojicicLietzenNordhausenVirta2021}, of which this work is an extended version.

\subsection{Tensor decompositions}

The term tensor decomposition refers to a class of algorithms that approximate an input tensor as a sum or product of tensors/matrices of lower rank. The low-rank structure is usually thought to represent the signal/information of the original data tensor, whereas, if we manage to choose the rank appropriately, the difference between the original tensor and its low-rank approximation is pure noise. A comprehensive review of tensor decompositions is given by \cite{Kolda2009} and, for some recent uses of tensor decompositions in the context of image data, see \cite{Inoue2016,zhang2020robust, lou2019robust}.

Our decomposition of choice in this work is the higher-order singular value decomposition \cite{Lathauwer2000}. In HOSVD, an $m$th order tensor $\mathcal{A}\in\mathbb{R}^{p_1\times\cdots\times p_m} $ is approximated as $\mathcal{A} \approx \mathcal{B}\times_{i=1}^m\textbf{U}_i$, a multilinear product of the \textit{core tensor} $\mathcal{B}$ of a low dimensionality and the matrices $\textbf{U}_1, \dots ,\textbf{U}_m$ having orthonormal columns. The matrices $\textbf{U}_i$ are in HOSVD estimated through singular value decompositions of certain flattenings of the tensor $\mathcal{A}$, in turn allowing the estimation of the core. In practice, the optimal dimensions of the core tensor are usually not known \textit{a priori} and have to be estimated, which is the main objective of the current work. We discuss HOSVD more closely in Section \ref{sec:model} in conjunction with our model of choice. In fact, there we use a specific ``statistical'' version of HOSVD known as $(2\mathrm{D})^2$PCA~\cite{zhang20052d}.

Besides HOSVD, another classical tensor decomposition is the Tucker decomposition \cite{tucker1966some}, which also seeks for an approximation of the previous form but uses a different algorithm for estimating the loading matrices and the core. However, in this paper we have chosen to work with HOSVD as it has a closed-form solution, which is what allows studying the theoretical properties of our proposed estimator of the latent dimensionality.

%is easier to work with (on sample level the singular value decomposition is realized as an eigendecomposition).
%Both have their merits but we work with HOSVD as it has an interesting statistical theory.

%More precisely, HOSVD constructs the matrices $\textbf{U}_k$, $k=1,\dots m$, such that the columns of $\textbf{U}_k$ are the first $d_k$ left singular vectors of the $k$-flattening $\mathcal{A}_k$, $k=1,\dots m$, of $\mathcal{A}$, ordered in decreasing order w.r.t. the corresponding singular values.

Finally, we still remark that, arguably, the most typical way of decomposing image data to low-rank components is through standard principal component analysis (PCA) \cite{jolliffe2002principal}. However, as standard PCA operates on vector-valued data and not tensors, this approach requires the vectorization of the data, causing them to lose their natural tensor structure. Whereas, tensor decompositions %fully
retain the row-column structure of image data, making them the natural choice in the current context.

\subsection{Order determination}

The problem of choosing the (in some sense) optimal rank in tensor decompositions, or in dimension reduction in general, is known as order determination.

The order determination literature can be roughly divided to two categories: (1) methods targeted for specific parametric/semi-parametric models and, (2) more general methods that estimate the rank of a fixed matrix from its noisy estimate (under some regularity conditions).

Category (1) is by far the more popular one (as general methods are more difficult to come by) and usually accomplishes the estimation by exploiting asymptotic properties of eigenvalues of certain matrices, see, e.g., \cite{schott2006high,nordhausen2022asymptotic} for PCA and \cite{bura2011dimension, zhu2006sliced} for dimension reduction in the context of regression (sufficient dimension reduction). Whereas, methods belonging to category (2) tend to be based on various bootstrapping and related procedures, see \cite{YeWeiss2003, LuoLi2016} and, in particular, \cite{LuoLi2021} whose augmentation procedure serves as a starting point for the current work.

We note that all of the previous references targeted exclusively vector-valued data and order determination in the context of general tensor-valued data is indeed still rare in the literature. Moreover, aside from ~\cite{RadojicicLietzenNordhausenVirta2021}, to our best knowledge, automated order determination in the context of matrix-variate (gray-scale image) data has been studied only by \cite{tu2019generalized} who use Stein's unbiased risk estimation for the task. Also, a simpler problem of selecting the dimension when the amount of retained variance is pre-determined is discussed in \cite{hung2012multilinear}. Thus, there is still much work to do in the order determination of tensor data and the current work is a step towards this direction.

%The problem of choosing an optimal number of is known as order determination. However, these are highly subjective and cannot be seen as reliable in practice.

\subsection{Relation to previous work}\label{sec:previous}

In this work, we propose an automatic procedure for the selection of the dimensionality in the HOSVD decomposition of tensorial data. Our primary contributions in relation to earlier literature are the following.

\begin{itemize}
    \item[(i)] To our best knowledge, %As far as we are aware,
    ours is the first order determination procedure for general tensor-valued data with statistical guarantees. %\joni{(Should we have here, as a contrast, a reference where some kind of order determination (even with a scree plot) is done for tensor-valued data?)}
    \item[(ii)] Unlike \cite{LuoLi2021} who originally introduced the augmentation procedure in vector-valued context, we prove the validity of the procedure also on the population level, see Corollary \ref{cor:cor_1}.
    \item[(iii)] We derive the full limiting distribution for the norms of augmented parts of the noise eigenvectors, see part (ii) of Theorem \ref{thm:consistency_eigenvectors}. This is in strict contrast to \cite[Section 6]{LuoLi2021} who, in an analogous context (PCA) for vector-valued data, based their work on the weaker claim that the norms are non-negligible in probability. %\textcolor{blue}{Should we reformulate this? This "only" sounds a bit weird.} \joni{(Would this form be better now?)}
    This improvement both allows us to get quantitative results concerning the eigenvectors and sheds some light on the interplay of the augmentation procedure with the true dimensionality of the data.
    \item[(iv)] To accompany the augmentation estimator, we also propose an alternative estimator of the latent dimensionality based on the ``ladle'' procedure \cite{LuoLi2016}.
\end{itemize}

Compared to the conference paper \cite{RadojicicLietzenNordhausenVirta2021}, which presented versions of the augmentation and ladle procedures for matrix-valued data and of which the current work is an extended version, we go beyond them in the following respects:

\begin{itemize}
    \item [(i)] We allow for general tensor-valued data, not just matrices.
    \item[(ii)] We give estimation guarantees for the augmentation method both on the population and the sample level (unlike \cite{RadojicicLietzenNordhausenVirta2021} who did not consider the theoretical properties of the method at all).
\end{itemize}

%A more economical approach is to retain the tensorial structure of the data\footnote{Our preliminary work on the topic in the case of matrix-variate data was reported in~\cite{RadojicicLietzenNordhausenVirta2021}.}.

%For the purpose of dimension reduction, we use an approach based on higher-order singular value decomposition (HOSVD)~\cite{Lathauwer2000}, a method that aims to approximate  Thus, the aim of the work is to propose an automatic tool for determining the latent dimensions and consequentially transform data appropriately, to simplify the future analysis. %We will place a special focus to $3$rd-order tensors, where

%We propose two approaches for the determination of the dimensionalities in the HOSVD-model. These are based, respectively, on the general augmentation and bootstrap procedures presented initially for vector-valued observations in~\cite{LuoLi2016} and~\cite{LuoLi2021}. The problem of estimation of latent dimensions using data augmentation for $2$nd order tensors, i.e. matrices has been discussed in~\cite{RadojicicLietzenNordhausenVirta2021}.

%HOSVD is a generalization of matrix SVD for general tensors, and aims to approximate tensor $\mathcal{A}$ as a multilinear product of a core tensor of lower dimension and low-rank matrices with orthogonal columns. More precisely, HOSVD estimates mixing matrices as the matrices containing left singular vectors of the $k$-flattening $\mathcal{A}_k$, $k=1,\dots m$, of $\mathcal{A}$ ordered in decreasing order w.r.t. the corresponding singular values.

\subsection{Organization of the manuscript}

The rest of the paper is organized as follows. In Section~\ref{sec:model} we introduce the statistical framework along with HOSVD for the proposed model. The proposed augmentation estimator as well as its theoretical properties are discussed in Section~\ref{sec:augmentation}, separately on the population and sample levels. Section~\ref{sec:bootstrap} covers the alternative, bootstrap-based ladle estimator. We conclude the paper by evaluating the performances of the proposed estimators in Section~\ref{sec:numerical_results} and by discussing possible future work in Section \ref{sec:discussion}. A summary of tensor notation is given in Appendix \ref{sec:tensor} and the proofs of all technical results are collected in Appendix~\ref{sec:proofs} in the supplement. %\joni{(Remember to modify this still if we move things to a supplement.)}

%Typical solutions involve either rule of thumb where enough components are selected to reach a pre-specified amount of ``explained variation'' \cite[Chapter 6]{jolliffe2002principal} or more involved statistical procedures \cite{schott2006high, nordhausen2016asymptotic}. However, these approaches are either highly subjective or involve strict distributional assumptions, which hinders their applicability in the context of image data.

\section{Model}\label{sec:model}

We use standard tensor notation throughout the manuscript: the calligraphy font $\mathcal{A} $ refers to individual tensors, the $k$-unfolding of a tensor $\mathcal{A}$ is denoted by $\mathcal{A}_k$ and the $k$-mode multiplication of a tensor $\mathcal{A}$ by a matrix $\textbf{A}_k$ is denoted by $\mathcal{A} \times_k \textbf{A}_k$, etc. For readers unfamiliar with tensor notation a summary is given in Appendix \ref{sec:tensor}; see also \cite{Kolda2009}.

We start by defining an appropriate statistical framework. Let $\mathcal{X}^1, \ldots , \mathcal{X}^n \in\mathbb{R}^{p_1\times\dots\times p_m}$ be an observed set of tensors of order $m\in\mathbb{N}$ drawn independently from the model
\begin{align}\label{def:tensor_model}
\mathcal{X}= \mathcal{M} +\mathcal{Z} \times_{k=1}^m \textbf{U}_k +\mathcal{E},
\end{align}
where $\mathcal{M} \in \mathbb{R}^{p_1 \times\dots\times p_m}$ is the mean tensor, %and averaging is done element-wise,
$\textbf{U}_k\in\mathbb{R}^{p_k \times d_k}$, $k=1,\dots,m$, are unknown mixing matrices with orthonormal columns and $\mathcal{Z}\in \mathbb{R}^{d_1 \times\dots\times d_m}$ is a \textit{core tensor} of order $m$ with zero mean, finite second moment, $\mathbb{E}(\| \mathcal{X} \|_F^2) < \infty$, and dimensions $d_k\leq p_k$, $k=1,\dots,m$. Furthermore, the additive noise $\mathcal{E}$ is assumed to follow a tensor spherical distribution, i.e., $\mathcal{E}\times_{k=1}^m\textbf{V}_k\sim \mathcal{E}$, for all orthogonal matrices $\textbf{V}_k\in\mathbb{R}^{p_k\times p_k}$, $k=1,\dots,m$, where the symbol $\sim$ means ``is equal in distribution to''. Additionally, we make the technical assumption that for all $k$-flattenings $\mathcal{Z}_k\in\mathbb{R}^{d_k\times \rho_k }$ of $\mathcal{Z}$ the matrix $\mathbb{E}(\mathcal{Z}_k\mathcal{Z}_k')\in\mathbb{R}^{d_k\times d_k}$ is positive definite, where $\rho_k := \prod_{j \neq k} p_j$. This assumption is made precisely for the identifiability of the latent dimensions $(d_1, \ldots , d_m)$. A schematic representation of the model is given in Figure~\ref{fig:TensorModel} for $m=3$. A special case of Model~\ref{def:tensor_model} for $m=2$ was studied in~\cite{RadojicicLietzenNordhausenVirta2021}.

\begin{figure}[ht]
    \centering
    \includegraphics[width=0.85\linewidth]{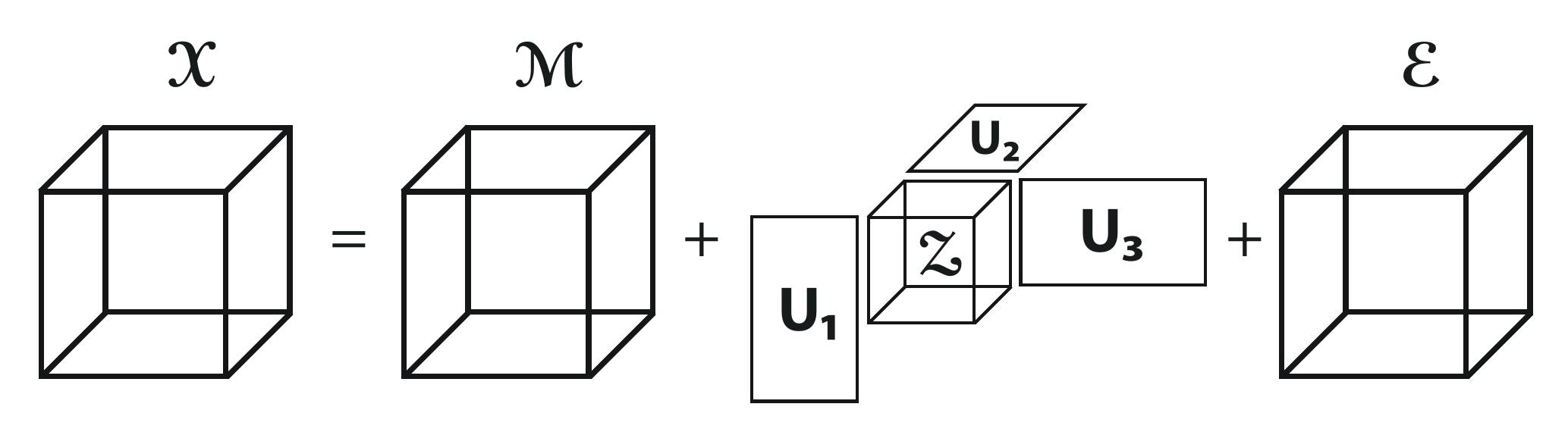}
    \caption{Schematic representation of Model~\ref{def:tensor_model} for $m=3$.}
    \label{fig:TensorModel}
\end{figure}

Our proposed approach is based on a statistical formulation of HOSVD known as $(2\mathrm{D})^2$PCA \cite{zhang20052d}. For this, consider the $k$-flattenings of Model~\ref{def:tensor_model}, for $k=1,\dots, m$,
\begin{align}\label{def:matrix_k_flattening_model}
\mathcal{X}_{k} = \mathcal{M}_k+\textbf{U}_k\mathcal{Z}_k (\textbf{U}^\otimes_{-k})'+\mathcal{E}_k,
\end{align}
where $\mathcal{X}_{k},\, \mathcal{M}_k,\,\mathcal{E}_k\in\mathbb{R}^{p_k\times\rho_k}$, $\mathcal{Z}_k\in\mathbb{R}^{d_k\times\prod_{i\neq k}d_i}$ and $\textbf{U}^\otimes_{-k} := \textbf{U}_{k+1}\otimes\cdots\otimes\textbf{U}_m\otimes\textbf{U}_1\otimes\cdots\otimes\textbf{U}_{k-1}$. Then, the HOSVD solution to Model~\ref{def:tensor_model} is
$(\mathcal{X}-\mathcal{M})\times_{k=1}^m\textbf{V}_k'$, where the columns of $\textbf{V}_k\in\mathbb{R}^{p_k\times d_k}$, $k=1,\dots, m$, are the first $d_k$ eigenvectors of the matrix $\mathbb{E}\{ (\mathcal{X}_k-\mathcal{M}_k)(\mathcal{X}_k-\mathcal{M}_k)'\}$. %Observe that $\textbf{V}_k$ coincides with $\textbf{U}_k$, up to post-multiplication by an orthogonal matrix. Especially,
If, for a fixed $k$, the matrix $\mathbb{E}(\mathcal{Z}_k\mathcal{Z}_k')$ is a diagonal matrix with pair-wise distinct diagonal elements, then $\textbf{V}_k$ equals the mixing matrix $\textbf{U}_k$ up to a permutation and sign change of columns. However, even if that is not the case, the column space of the estimated matrix $\textbf{V}_k$ coincides with that of $\textbf{U}_k$.

It is worth mentioning that even though Model~\ref{def:tensor_model} assumes that the core tensor $\mathcal{Z}$ is mixed by matrices $\textbf{U}_1,\dots,\textbf{U}_m$ with \textit{orthonormal} columns, this assumption is without loss of generality since the singular values and the right singular vectors of a general mixing matrix can be absorbed into the core. This, however, has the drawback of making the core tensor identifiable only up to multiplications by invertible matrices from each mode. Nevertheless, the latent dimensionalities $d_k$ are identifiable in this case as well, and therefore we tolerate this ambiguity. Furthermore, this reveals why the proposed method is also a valid pre-processing step for computationally more involved linear feature extraction procedures.

Having estimated (in the previous sense) the parameters $\textbf{U}_k$, we next develop our augmentation estimator for the latent dimensionality.

%\joni{(Note: we currently use the index $i$ to refer both to the observations in a sample and to the modes. Should we modify?) Modified.}
%$\mathrm{rank}(\textbf{U}_k)=d_k$, and more
%since these two coincide up to post multiplication by a full-rank orthogonal matrix.

\section{Augmentation estimator}\label{sec:augmentation}
\subsection{Population-level methodology}

We next describe our basic strategy behind the estimation of the latent dimensions $d_1,\dots,d_m$ using data augmentation. The approach can be seen as a tensorial extension of the method introduced by \cite{LuoLi2021}. Let $\mathcal{X}_k$ be the $k$-flattening \eqref{def:matrix_k_flattening_model}. Then, as $\textbf{U}^\otimes_{-k}$ has orthonormal columns, we have
$$
\mathbb{E}\{(\mathcal{X}_k-\mathcal{M}_k)(\mathcal{X}_k-\mathcal{M}_k)'\}=\textbf{U}_k\mathbb{E}(\mathcal{Z}_k\mathcal{Z}_k')\textbf{U}_k'+\mathbb{E}(\mathcal{E}_k\mathcal{E}_k').
$$
%where the additivity holds as tensor flattening preserves independence, thus implying that $\mathcal{Z}_k$ and $\mathcal{E}_k$ are independent.
%\textcolor{blue}{Add maybe interpretation of scatters of flattenings in case of the color images or spatial data. If we have room.}.

\begin{lemma}\label{lemma:lemma1}
Let $\mathcal{E}\in\mathbb{R}^{p_1\times\cdots\times p_m}$ be a random tensor with tensor spherical distribution. Then, for $k=1,\dots,m$,  $\displaystyle\mathbb{E}(\mathcal{E}_k\mathcal{E}_k')=\sigma_k^2\textbf{I}_{p_k}$, %\quad\mathbb{E}(\mathcal{E}_k'\mathcal{E}_k)=(\sigma_k')^2\textbf{I}_{\rho_k},
for some $\sigma_k^2>0$, where $\mathcal{E}_k$ is the $k$-flattening of $\mathcal{E}$.
\end{lemma}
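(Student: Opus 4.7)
The plan is to reduce the full tensor-sphericity assumption to an ordinary matrix-level rotational invariance of the $k$-flattening $\mathcal{E}_k$, and then invoke the classical fact that a symmetric matrix invariant under all orthogonal conjugations must be a scalar multiple of the identity.

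First, I would recall the interaction between $k$-mode multiplication and $k$-flattening: for any matrices $\textbf{V}_1,\ldots,\textbf{V}_m$ of compatible dimensions,
\begin{equation*}
(\mathcal{E} \times_{j=1}^m \textbf{V}_j)_k \;=\; \textbf{V}_k\, \mathcal{E}_k\, (\textbf{V}_{-k}^{\otimes})',
\end{equation*}
where $\textbf{V}_{-k}^{\otimes} := \textbf{V}_{k+1}\otimes\cdots\otimes\textbf{V}_m\otimes\textbf{V}_1\otimes\cdots\otimes\textbf{V}_{k-1}$. This identity is a standard property of the mode product and is recalled in Appendix~\ref{sec:tensor}.

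Next, I would exploit the tensor-spherical assumption by plugging in the specific choice $\textbf{V}_j = \textbf{I}_{p_j}$ for all $j \neq k$, while letting $\textbf{V}_k$ vary over the orthogonal group $\mathcal{O}(p_k)$. Since $\mathcal{E} \times_{j=1}^m \textbf{V}_j \sim \mathcal{E}$ by assumption, the displayed identity yields $\textbf{V}_k \mathcal{E}_k \sim \mathcal{E}_k$ for every orthogonal $\textbf{V}_k \in \mathcal{O}(p_k)$. Taking second moments (which exist since $\mathbb{E}\|\mathcal{X}\|_F^2 < \infty$ forces $\mathbb{E}\|\mathcal{E}\|_F^2 < \infty$ via the model decomposition, and hence the entries of $\mathcal{E}_k$ are square-integrable) gives
\begin{equation*}
\textbf{V}_k\, \mathbb{E}(\mathcal{E}_k \mathcal{E}_k')\, \textbf{V}_k' \;=\; \mathbb{E}(\mathcal{E}_k \mathcal{E}_k') \quad \text{for all } \textbf{V}_k \in \mathcal{O}(p_k).
\end{equation*}

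The final step is to invoke the classical linear-algebraic fact that any symmetric matrix $\textbf{S}$ satisfying $\textbf{V}\textbf{S}\textbf{V}' = \textbf{S}$ for all orthogonal $\textbf{V}$ is necessarily of the form $\sigma^2 \textbf{I}$ for some scalar $\sigma^2$. (One way to see this: diagonalize $\textbf{S}$ and use orthogonal conjugations that permute the eigenvectors to conclude that all eigenvalues coincide; alternatively, pick $\textbf{V}$ to be a Householder reflection and argue entrywise.) This immediately gives $\mathbb{E}(\mathcal{E}_k \mathcal{E}_k') = \sigma_k^2 \textbf{I}_{p_k}$ with $\sigma_k^2 = p_k^{-1}\,\mathbb{E}\|\mathcal{E}_k\|_F^2 \geq 0$, and strict positivity follows whenever $\mathcal{E}$ is not almost surely zero, which is the standing assumption on the noise.

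There is no real obstacle here; the argument is essentially a dimension-by-dimension reduction of tensor sphericity to matrix sphericity of each flattening, followed by a textbook invariance argument. The only subtlety worth flagging is that the reduction works mode-by-mode precisely because one is free to plug identity matrices into the $m-1$ non-active modes in the sphericity definition.
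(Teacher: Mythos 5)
Your proof is correct and follows essentially the same route as the paper: reduce tensor sphericity of $\mathcal{E}$ to orthogonal invariance of the $k$-flattening and conclude that $\mathbb{E}(\mathcal{E}_k\mathcal{E}_k')$ commutes with all of $\mathcal{O}(p_k)$, hence is a scalar matrix. The only cosmetic difference is that you set the off-mode matrices to identity outright, whereas the paper keeps them general and notes that the resulting Kronecker factor, being orthogonal, drops out of the second-moment computation; both land on the same invariance identity, and your added remarks on integrability and strict positivity are fine but not substantively new.
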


%\joni{The matrix $(\textbf{V}_{k+1}\otimes\textbf{V}_{k+2}\otimes\cdots\otimes\textbf{V}_m\otimes\textbf{V}_1\otimes\textbf{V}_2\otimes\cdots\otimes\textbf{V}_{k-1})$ cannot be used to produce an arbitrary sign-change and permutation matrix?}

Using Lemma~\ref{lemma:lemma1} we obtain that
$$
\mathbb{E}\{ (\mathcal{X}_k-\mathcal{M}_k)(\mathcal{X}_k-\mathcal{M}_k)'\} =\textbf{U}_k\mathbb{E}(\mathcal{Z}_k\mathcal{Z}_k')\textbf{U}_k'+\sigma_k^2\textbf{I}_{p_k},
$$
and, since $\mathrm{rank}(\textbf{U}_k\mathbb{E}(\mathcal{Z}_k\mathcal{Z}_k')\textbf{U}_k') = d_k$, the problem of estimating $d_k$ boils down to the problem of estimating the rank of the matrix $\mathbb{E}\{ (\mathcal{X}_k-\mathcal{M}_k)(\mathcal{X}_k-\mathcal{M}_k)'\} - \sigma_k^2\textbf{I}_{p_k}$. A naive way would be to inspect the scree plot of the eigenvalues of the sample estimate of $\mathbb{E}\{ (\mathcal{X}_k-\mathcal{M}_k)(\mathcal{X}_k-\mathcal{M}_k)' \}$ and search for an \textit{elbow}, a point at which the eigenvalues start to even off. However, it is often very difficult and subjective to find such a point. Therefore, we supplement the scree plot with additional information extracted from the eigenvectors of an appropriately constructed matrix, a task in which we employ the augmentation technique demonstrated in~\cite{RadojicicLietzenNordhausenVirta2021}, and initially introduced by~\cite{LuoLi2021} for vectors.

More precisely, for fixed $k = 1, \ldots, m$ and $r_k\in\mathbb{N}$, we define $\textbf{X}_S\in\mathbb{R}^{r_k \times \rho_k}$ to be a random matrix with i.i.d. entries having the distribution $\mathcal{N}(0, \sigma_k^2/\rho_k)$, implying that $\mathbb{E}(\textbf{X}_S)=\textbf{0}$ and $\mathbb{E}(\textbf{X}_S \textbf{X}_S')=\sigma_k^2\textbf{I}_{r_k}$. We then augment (concatenate) the centered $k$-flattening of $\mathcal{X}$ with $\textbf{X}_S$ to obtain the $(p_k + r_k) \times \rho_k$ matrix $\textbf{X}_k^*=((\mathcal{X}_k-\mathcal{M}_k)',\textbf{X}_S')'$ that satisfies,
$$
\mathbb{E}\{\textbf{X}_k^* (\textbf{X}_k^*)'\}%=\begin{pmatrix} \mathrm{Cov}(\textbf{X}) &\textbf{0}\\
%\textbf{0} & \mathrm{Cov}(\textbf{X}_S)
%\end{pmatrix}
%$$
%$$
=\begin{pmatrix} \textbf{U}_k\mathbb{E}(\mathcal{Z}_k \mathcal{Z}_k')\textbf{U}_k' &\textbf{0}\\
\textbf{0} & \textbf{0}
\end{pmatrix}+\sigma_k^2\textbf{I}_{p_k+r_k}.
$$
We further define $\textbf{M}_k^* := \mathbb{E}\{ \textbf{X}_k^* (\textbf{X}_k^*)'\}-\sigma_k^2\textbf{I}_{p_k+r_k}$.  Then $\textbf{M}_k^*$ and $\mathbb{E}(\mathcal{Z}_k \mathcal{Z}_k')$ both have the same rank $d_k$ and also the same positive eigenvalues $\lambda_{k,1}\geq \lambda_{k,2}\geq\cdots\geq \lambda_{k,d_k}>0$.

Let next $\boldsymbol\beta_{k,i}^* = (\boldsymbol\beta_{k,i}', \boldsymbol\beta_{k,i,S}')'\in\mathbb{R}^{p_k+r_k}$, $i=1,\dots,p_k+r_k$, be any eigenvector of $\textbf{M}_k^*$ corresponding to its $i$th largest eigenvalue, where we call the $r_k$-dimensional subvector $\boldsymbol\beta_{k,i,S}$ its \textit{augmented part}. Then, for $i\leq d_k$,
$$
\textbf{M}_k^*\boldsymbol\beta_{k,i}^*=(\boldsymbol\beta_{k,i}'\textbf{U}_k\mathbb{E}(\mathcal{Z}_k \mathcal{Z}_k')\textbf{U}_k',\textbf{0}')'=\lambda_{k,i}(\boldsymbol\beta_{k,i}',\boldsymbol\beta_{k,i,S}')',
$$
implying that $\boldsymbol\beta_{k,i,S}=\textbf{0}$ for $i=1,\dots d_k$. Not only does the equivalent fail for the eigenvectors belonging to a zero eigenvalue ($i > d_k$), but the following theorem shows that for $r_k\to\infty$, exactly the opposite happens.

Prior to stating the theorem, let us briefly discuss the form and the arbitrariness of the zero-eigenvalue eigenvectors of $\textbf{M}^*_k$. Let  $\textbf{B}^*_{k,0} \in \mathbb{R}^{(p_k + r_k) \times (p_k + r_k - d_k)}$ denote a matrix that contains an arbitrary orthonormal basis of the null space of $\textbf{M}_k^*$ as its columns (the below result is invariant to the exact choice of this basis). Then, for $i>d_k$, any eigenvector $\boldsymbol{\beta}_{k,i}^*$ lies in the null space of $\textbf{M}_k^*$ and is thus of the form $\boldsymbol{\beta}_{k,i}^*=\textbf{B}_{k,0}^*\textbf{a}$ for some unit length vector $\textbf{a}\in\mathbb{R}^{p_k+r_k-d_k}$. %Especially, we can take $\textbf{B}_{k,0}^*=\mathrm{diag}([\textbf{u}_{d_k+1},\dots,\textbf{u}_{p_k}],\textbf{I}_{r_k})$, where  $\{\textbf{u}_{d_k+1},\dots,\textbf{u}_{p_k}\}$ is a fixed orthonormal basis of the null space of $\textbf{U}_k$. Therefore, the set of zero-eigenvalue eigenvectors of $\textbf{M}_k^*$ is $\{\textbf{B}_{k,0}^*\textbf{a}:\textbf{a}\in\mathcal{S}_{p_k+r_k-d_k}\}$, where $\textbf{B}_{k,0}^*=\mathrm{diag}([\textbf{u}_{d_k+1},\dots,\textbf{u}_{p_k}],\textbf{I}_{r_k})$ and $\mathcal{S}_{p_k+r_k-d_k}\}$ denotes the unit sphere in $\mathbb{R}^{p_k+r_k-d_k}$.}

The following theorem then shows that the norm of the augmented part of a randomly chosen zero-eigenvalue eigenvector follows a specific beta distribution.

\begin{theorem}\label{thm:thm0}
Fix $i=(d_k + 1),\dots,(p_k+r_k)$ and let $\boldsymbol\beta_{k,i}^* = (\boldsymbol\beta_{k,i}', \boldsymbol\beta_{k,i,S}')'\in\mathbb{R}^{p_k+r_k}$ be of the form $\boldsymbol\beta_{k,i}^* = \textbf{B}^*_{k,0} \textbf{a}$ where $\textbf{a}$ is drawn uniformly from the unit sphere in $\mathbb{R}^{p_k + r_k - d_k}$. Then  $\|\boldsymbol\beta_{k,i,S}\|^2\sim\mathrm{Beta}\{r_k/2,(p_k-d_k)/2\}$.
\end{theorem}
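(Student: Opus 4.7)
The plan is to exploit the block-diagonal structure of $\textbf{M}_k^*$ to pick a convenient orthonormal basis of its null space, and then reduce the claim to a textbook fact about uniform distributions on Euclidean spheres. Since $\textbf{M}_k^*$ is block-diagonal with $\textbf{U}_k\mathbb{E}(\mathcal{Z}_k\mathcal{Z}_k')\textbf{U}_k'$ in the upper-left $p_k\times p_k$ block (of rank $d_k$) and a zero block of size $r_k\times r_k$ in the lower-right, its null space splits as the orthogonal direct sum of the $(p_k-d_k)$-dimensional null space of the upper block, embedded into $\mathbb{R}^{p_k+r_k}$ via $(\textbf{v}',\textbf{0}')'$, and the full $r_k$-dimensional subspace $\{\textbf{0}\}\times\mathbb{R}^{r_k}$. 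I would therefore fix
\begin{equation*}
\textbf{B}_{k,0}^{*}=\begin{pmatrix}\textbf{U}_{k,0} & \textbf{0}\\ \textbf{0} & \textbf{I}_{r_k}\end{pmatrix},
\end{equation*}
where the columns of $\textbf{U}_{k,0}\in\mathbb{R}^{p_k\times(p_k-d_k)}$ are any orthonormal basis of the null space of $\textbf{U}_k\mathbb{E}(\mathcal{Z}_k\mathcal{Z}_k')\textbf{U}_k'$.

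Before computing, I would discharge the invariance point already flagged in the preamble to the theorem: any other orthonormal basis can be written as $\tilde{\textbf{B}}_{k,0}^{*}=\textbf{B}_{k,0}^{*}\textbf{Q}$ for some orthogonal $\textbf{Q}\in\mathbb{R}^{(p_k+r_k-d_k)\times(p_k+r_k-d_k)}$, and by rotational invariance of the uniform law on the sphere, $\textbf{Q}\textbf{a}$ is again uniform. Hence $\tilde{\textbf{B}}_{k,0}^{*}\textbf{a}\stackrel{d}{=}\textbf{B}_{k,0}^{*}(\textbf{Q}\textbf{a})\stackrel{d}{=}\textbf{B}_{k,0}^{*}\textbf{a}$, so the distribution in question depends only on the null space itself, and the basis above can be used without loss of generality. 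Splitting $\textbf{a}=(\textbf{a}_1',\textbf{a}_2')'$ with $\textbf{a}_1\in\mathbb{R}^{p_k-d_k}$ and $\textbf{a}_2\in\mathbb{R}^{r_k}$, the block structure immediately yields $\boldsymbol\beta_{k,i,S}=\textbf{a}_2$ and therefore $\|\boldsymbol\beta_{k,i,S}\|^2=\|\textbf{a}_2\|^2$.

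To finish, I would invoke the classical fact that if $\textbf{a}$ is uniform on the unit sphere of $\mathbb{R}^{n}$, the squared norm of any fixed $r$-coordinate sub-block follows $\mathrm{Beta}(r/2,(n-r)/2)$. This is seen in one line by realising $\textbf{a}=\textbf{Y}/\|\textbf{Y}\|$ with $\textbf{Y}\sim\mathcal{N}(\textbf{0},\textbf{I}_n)$, so that $\|\textbf{a}_2\|^2=\|\textbf{Y}_2\|^2/(\|\textbf{Y}_1\|^2+\|\textbf{Y}_2\|^2)$ is the ratio of an independent $\chi^2_r$ to $\chi^2_r+\chi^2_{n-r}$, which is exactly the standard construction of the beta law. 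Applying this with $n=p_k+r_k-d_k$ and $r=r_k$ gives $(n-r)/2=(p_k-d_k)/2$, which is the claim. There is no real obstacle in the argument; the only point requiring any care is making the basis-invariance step explicit so that the distributional statement is unambiguous, and after that the conclusion is immediate from the block form of $\textbf{M}_k^{*}$.
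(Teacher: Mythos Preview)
Your proof is correct and follows essentially the same approach as the paper: choose the block-diagonal basis $\textbf{B}_{k,0}^{*}=\mathrm{diag}(\textbf{U}_{k,0},\textbf{I}_{r_k})$ for the null space, identify $\boldsymbol\beta_{k,i,S}$ with the last $r_k$ coordinates $\textbf{a}_2$ of $\textbf{a}$, and appeal to the classical Beta law for the squared norm of a coordinate sub-block of a uniform vector on the sphere. Your version is somewhat more explicit in justifying the basis-invariance step and in sketching the $\chi^2$-ratio derivation of the Beta distribution, whereas the paper simply states both facts, but the logical content is the same.
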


% \begin{theorem}\label{thm:thm0}
% Let $\textbf{M}_k^*$ be as defined above and let $\boldsymbol\beta_{k,i}^* = (\boldsymbol\beta_{k,i}', \boldsymbol\beta_{k,i,S}')'\in\mathbb{R}^{p_k+r_k}$, $i=d_k + 1,\dots,p_k+r_k$ be
% %any eigenvector of
% \textcolor{blue}{a randomly chosen eigenvector of } $\textbf{M}_k^*$, %obtained by \textcolor{blue}{a randomly chosen algorithm}%any algorithm
%  corresponding to its $i$th eigenvalue, where we call the $r$-dimensional subvector $\boldsymbol\beta_{k,i,S}'\in\mathbb{R}^{r_k}$, $r_k>0$ its augmented part. Then, for $i>d_k$, $\|\boldsymbol\beta_{k,i,S}\|^2\sim\mathrm{Beta}\{r_k/2,(p_k-d_k)/2\}$.
% \end{theorem}

Figure~\ref{fig:fig_beta} illustrates the behaviour of the tail probabilities of the augmented parts of randomly chosen eigenvectors (in the sense of Theorem \ref{thm:thm0})  belonging to a zero eigenvalue, as a function of $r_k$. Note that, in practice, we would like the sample analogues of the quantities $\|\boldsymbol\beta_{k,i,S}\|^2$, $i = d_k + 1, \ldots, p_k$ to be as large as possible to be able to distinguish the transition from signal to noise. Based on Figure~\ref{fig:fig_beta} this can be achieved by using large values of $r_k$. However, this matter actually turns out to be more complicated in a finite-sample case where increasing $r_k$ with $n$ held fixed might lead to high-dimensional phenomena, see Section \ref{sec:discussion}.

\begin{figure*}[ht]
    \centering
    \includegraphics[width=0.95\linewidth]{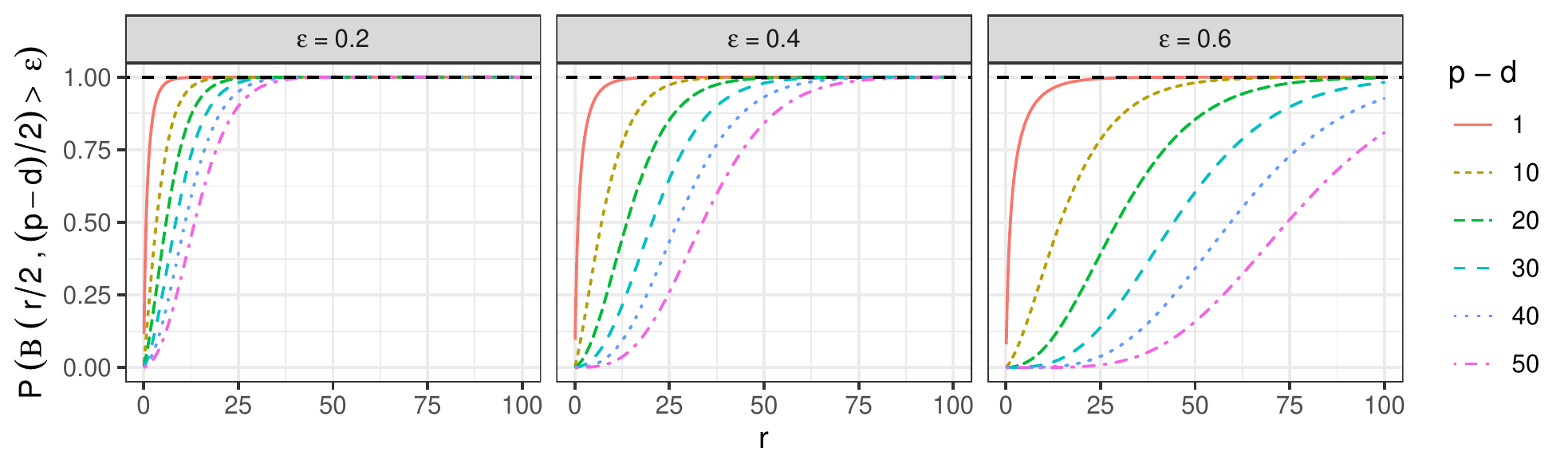}

    \caption{The curves represent the probability that a random variable from $\mathrm{Beta}(r/2,(p-d)/2)$-distribution takes a value larger than $\varepsilon>0$, as a function of the parameter $r$, for various values of $p-d$ and $\varepsilon$. }
    \label{fig:fig_beta}
\end{figure*}

Given the distributional result in Theorem \ref{thm:thm0}, the following properties of the augmented parts of the null eigenvectors of $\textbf{M}_k^*$ now straightforwardly follow.

%it is now straightforward to   The following corollary further illustrates the behaviour of the norm of augmented subvectors of eigenvectors $\boldsymbol\beta_{k,i}^*$, for $i>d_k$. As it follows straightforwardly from Theorem~\ref{thm:thm0}, for the sake of conciseness, we omit the proof.

%\joni{Remark that no proof is given.}

\begin{cor}\label{cor:cor_1}
Under the conditions of Theorem \ref{thm:thm0}, the following hold.
\begin{itemize}
    \item[(i)] Let $\varepsilon_n$ be any sequence of positive real numbers such that $\varepsilon_n \rightarrow 0$ as $n \rightarrow \infty$. Then, $\mathbb{P}(\|\boldsymbol\beta_{k,i,S}\|^2>\varepsilon_n) \rightarrow 1$ as $n \rightarrow \infty$.
    \item[(ii)]  For fixed $p_k,\,d_k$ and for every $\varepsilon>0$, $\mathbb{P}(\|\boldsymbol\beta_{k, i, S}\|^2\geq 1-\varepsilon)\to 1$, as $r_k\to\infty$.
     \item[(iii)] In a high-dimensional regime, if $p_k-d_k=o(r_k)$,
     then for every $\varepsilon>0$,  $\mathbb{P}(\|\boldsymbol\beta_{k, i, S}\|^2\geq 1-\varepsilon)\to 1$, as $r_k\to\infty$.
\end{itemize}
\end{cor}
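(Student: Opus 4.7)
The plan is to reduce all three items directly to Theorem~\ref{thm:thm0}, which tells us that $W := \|\boldsymbol\beta_{k,i,S}\|^2 \sim \mathrm{Beta}(r_k/2, (p_k-d_k)/2)$, and then just read off the behaviour of this family of Beta distributions as its parameters move. The two facts to invoke are (a) every Beta distribution is absolutely continuous on $[0,1]$ with CDF $F$ satisfying $F(0)=0$, and (b) for $X \sim \mathrm{Beta}(\alpha,\beta)$,
\begin{equation*}
\mathbb{E}(X) = \frac{\alpha}{\alpha+\beta}, \qquad \mathrm{Var}(X) = \frac{\alpha\beta}{(\alpha+\beta)^2(\alpha+\beta+1)}.
\end{equation*}

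For part (i), the parameters of $W$ are fixed (only the threshold $\varepsilon_n$ moves with $n$). Since $F$ is continuous at $0$ and $F(0)=0$, $\mathbb{P}(W \le \varepsilon_n) = F(\varepsilon_n) \to F(0) = 0$, so $\mathbb{P}(W > \varepsilon_n) \to 1$. No nontrivial probabilistic step is needed beyond continuity of the CDF. (Equivalently, one can argue monotonically: the events $\{W > \varepsilon_n\}$ increase to $\{W > 0\}$, which has probability one.)

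For parts (ii) and (iii), I would plug $\alpha = r_k/2$ and $\beta = (p_k-d_k)/2$ into the formulas in (b). The mean equals $r_k/(r_k + p_k - d_k)$, which tends to $1$ either as $r_k \to \infty$ with $p_k-d_k$ fixed (part (ii)) or under $p_k-d_k = o(r_k)$ (part (iii)). A direct computation also shows that
\begin{equation*}
\mathrm{Var}(W) = \frac{2\, r_k(p_k-d_k)}{(r_k + p_k - d_k)^2(r_k + p_k - d_k + 2)} \;=\; O\!\left(\frac{p_k - d_k}{r_k^2}\right),
\end{equation*}
which likewise tends to $0$ in both regimes. Combining these two pieces of information through a one-line application of Chebyshev's (or Markov's) inequality gives $W \to 1$ in probability, which is exactly the conclusion $\mathbb{P}(W \ge 1-\varepsilon) \to 1$ for every $\varepsilon>0$.

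I do not anticipate a real obstacle: the only thing worth being careful about is keeping the two scalings straight in part (iii) so that both $\mathbb{E}(W) \to 1$ and $\mathrm{Var}(W) \to 0$ hold under the weaker assumption $p_k - d_k = o(r_k)$, rather than inadvertently using the stronger assumption that $p_k - d_k$ is fixed. Once the Beta mean/variance asymptotics are in hand, the three statements follow in a few lines each.
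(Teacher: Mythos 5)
Your proof is correct, and since the paper provides no explicit proof of this corollary (it is presented as ``straightforwardly following'' from Theorem~\ref{thm:thm0}), your argument via the Beta CDF and the Beta moment formulas is exactly the natural way to fill in the details. One small simplification worth noting for parts (ii) and (iii): because $1-W \geq 0$ and $\mathbb{E}(1-W) = (p_k-d_k)/(r_k + p_k - d_k) \to 0$ in both regimes, Markov's inequality applied to $1-W$ already yields $\mathbb{P}(1-W > \varepsilon) \to 0$, so the variance computation and Chebyshev are not needed at all.
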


% \begin{cor}\label{cor:cor_1}
% Let $\textbf{M}_k^*$ be as defined above and let $\boldsymbol\beta_{k,i}^* = (\boldsymbol\beta_{k,i}', \boldsymbol\beta_{k,i,S}')'\in\mathbb{R}^{p_k+r_k}$, $i=1,\dots,p_k+r_k$, \textcolor{blue}{a randomly chosen eigenvector of }% to be any eigenvector of
% $\textbf{M}_k^*$ corresponding to its $i$th eigenvalue, where we call the $r$-dimensional subvector $\boldsymbol\beta_{k,i,S}'\in\mathbb{R}^{r_k}$, $r_k>0$ its augmented part. Then, for $i>d_k$
% \begin{itemize}
%     \item[i)] Let $(\varepsilon_n)_n$ be any sequence of positive real numbers, such that $\lim_{n\to\infty}\varepsilon_n=0$. Then, for $i>d_k$ $\lim_{n\to\infty}\mathbb{P}(\|\boldsymbol\beta_{i,k,S}\|>\varepsilon_n)=1$.
%     \item[ii)]  For fixed $p_k,\,d_k$ and for every $\varepsilon>0$, $\mathbb{P}(\|\boldsymbol\beta_S\|^2\geq 1-\varepsilon)\to_P 1$, as $r_k\to\infty$.
%      \item[iii)] In high-dimensional regime, if $p_k-d_k=o(r_k)$,
%      then for every $\varepsilon>0$,  $\mathbb{P}(\|\boldsymbol\beta_S\|^2\geq 1-\varepsilon)\to_P 1$, as $r_k\to\infty$.
% \end{itemize}
% \end{cor}

Corollary~\ref{cor:cor_1} indicates that for $r_k$ large enough, the norms of the augmented parts of the zero-eigenvalue eigenvectors get arbitrary close to $1$, thus explaining (on the population level) the behavior observed in~\cite{RadojicicLietzenNordhausenVirta2021}, where, when the number of augmentations was increased, the function $\hat{f}_k$ that captures the information from the eigenvectors by accumulating the norms of their augmented parts acted as a linear function of the dimension $j$, for $j>d_1$; see Figure 5 in~\cite{RadojicicLietzenNordhausenVirta2021} for more insight. %\joni{(Should this rather be that ``Theorem 1 explains...'' as Corollary 1 does not really say anything about linearity?)}\textcolor{blue}{I would keep Corollary - is it now more clear why should it imply the linearity. }

The previous population properties of the augmented parts of the eigenvectors serve as the basis for the construction of the augmentation estimator. Note that the successful estimation of the noise variance $\sigma_k^2$ of the $k$th flattering is crucial for the above construction, and therefore we discuss it next in more detail.

\subsection{Estimation of the noise variance}

Let first $\mathcal{X}^1,\dots,\mathcal{X}^n$ be an i.i.d. sample from Model~\ref{def:tensor_model} and let $\Bar{\mathcal{X}}$ be the corresponding sample mean. %For the simplicity of the notation, we assume that the sample has been pre-centered by the sample mean, i.e. $\mathcal{X}^i\leftarrow \mathcal{X}^i-\Bar{\mathcal{X}}$.
Furthermore, let $\hat\sigma_{k,1}^2 \geq \cdots \geq \hat\sigma_{k,p_k}^2$ be the eigenvalues of $(1/n) \sum_{i=1}^n (\mathcal{X}_{k,i}-\Bar{\mathcal{X}}_k) (\mathcal{X}_{k,i}-\Bar{\mathcal{X}}_k)'$ and denote by $\sigma_{k,1}^2 \geq \cdots \geq \sigma_{k,p_k}^2$ the eigenvalues of $\mathbb{E}\left\{(\mathcal{X}_k-\mathcal{M}_k)(\mathcal{X}_k-\mathcal{M}_k)'\right\}$. Due to the independence of the signal and the noise, $\sigma_{k,i}^2 = \lambda_{k,i}+\sigma_k^2$ for $i=1,\dots,d_k$, and $\sigma_{k,i}^2 =\sigma_k^2$, for $i=d_k+1,\dots,p_k$, where $\lambda_{k,1},\dots,\lambda_{k,d_k}$ are the eigenvalues of $\mathbb{E}(\mathcal{Z}_k\mathcal{Z}_k')$. This consideration, together with Corollary~\ref{cor:cor_2} in Section \ref{sec:asymp}
implies how we can justly use $\hat\sigma^2_{k,d_k+1},\dots,\hat\sigma^2_{k,p_k}$ to construct a consistent estimator of the noise variance $\sigma_k^2$ of the $k$th mode. However, since it is mostly the case that one wishes to estimate the latent dimensions in all modes, the following discussion allows us to construct a pooled estimator of the noise variance using all modes.\\

%\joni{(Joni's comment: If we need to save space, the below derivation could be put to the appendix.)}

The matrix $\mathcal{E}_k$ is left spherical for each $k=1,\dots m$, thus making
$\mathbb{E}(\mathcal{E}_k\mathcal{E}_k')=\sigma_k^2\textbf{I}_{p_k}$,  $\sigma_k^2>0$. Therefore,
\begin{equation}\label{eq:diag_e_k}
\sigma_k^2=\mathbb{E}(\mathcal{E}_k\mathcal{E}_k')_{i,i}=\sum_{j=1}^{\rho_k%\prod_{l\neq k}p_l
}\mathbb{E}(\mathcal{E}_{k,(i,j)}^2).
\end{equation}
If we sum the identity~\eqref{eq:diag_e_k} over all $i=1,\dots,p_k$, we obtain
$ p_k\sigma_k^2=\sum_{i=1}^{p_k}\sum_{j=1}^{\rho_k%\prod_{l\neq k}p_l
}\mathbb{E}(\mathcal{E}_{k,(i,j)}^2)=\mathbb{E}\|\mathcal{E}\|_\mathrm{F}^2,$ thus implying the relationship
\begin{equation}\label{eq:sigmas_relation}
p_1\sigma_1^2=p_2\sigma_2^2=\cdots=p_m\sigma_m^2,
\end{equation}
between the noise variances of the $k$-flattenings $\mathcal{E}_k$ of the noise tensor $\mathcal{E}$.

Define now $S_k :=\{\frac{p_i}{p_k}\sigma_{i,j}^2:i=1,\dots,m,\,j=1,\dots,p_i\}$ to be the set of eigenvalues from all modes in the ``scale'' of the $k$th mode. Similarly,  define $\hat S_k :=\{\frac{p_i}{p_k}\hat\sigma_{i,j}^2:i=1,\dots,m,\,j=1,\dots,p_i\}$, to be the sample counterpart of $S_k$. %\textcolor{blue}{I would maybe here drop the median part - we already mentioned it in the first paper, and I see nothing special about the median.} As initially suggested in~\cite{LuoLi2021} for vector-valued observations and further adapted to matrix-variate setting in~\cite{RadojicicLietzenNordhausenVirta2021}, the median of the set $\hat{S}_k$ is e.g. natural estimator of $\sigma_k^2$ under the assumption that
%\begin{align}\label{ass:median}
%    d_1+\cdots d_m < \frac{p_1+\cdots+p_m}{m}.
%\end{align}
Lemma~\ref{lemma:lemma_consistency_of_estimators} in Section \ref{sec:asymp} shows that under certain assumptions on the compressibility of the data, quantiles as well as means of suitable tails of $\hat{S}_k$ are consistent estimators of $\sigma_k^2$.
%Naturally, assuming the suitable analogs of \eqref{ass:median} hold, also other quantiles of $\hat{S}_k$ can be used to estimate $\sigma_k^2$; see the Lemma~\ref{lemma:lemma_consistency_of_estimators}. % For example, in the simulation study, we will use $\min\{ \hat{S} \}$ which requires minimal assumptions but, as a downside, has a strong downward bias, \textcolor{black}{meaning that for finite sample sizes it mostly underestimates the true noise variance.}
 %Then, one can estimate noise variance $\sigma_k^2$ as
Naturally, once the noise variance of the $k$th mode has been estimated, we obtain estimates $\hat{\sigma}_i^2$, $i\neq k$, for the noise variances of the remaining modes simply by scaling $\hat{\sigma}_i^2=(p_k/p_i)\hat{\sigma}_k^2$, $i\neq k$. %\textcolor{blue}{To further clarify scailing constants see Remark~\ref{rem:rem1} in the Appendix.}

\begin{remark}\label{rem:rem1}
To further clarify the scaling constants $p_i/p_k$, $i=1,\dots,m$, used in the estimation of the noise variance in the $k$th mode, consider a scenario where the entries of $\mathcal{E}$ are uncorrelated with zero mean and variance $\delta^2>0$. Then, for $k=1,\dots,m$,   $\mathrm{E}(\mathcal{E}_k\mathcal{E}_k')=\sum_{i=1}^{\rho_k%\prod_{j\neq k}p_j
}\delta^2\textbf{I}_{p_k}=\rho_k%\prod_{j\neq k}p_j
\delta^2\textbf{I}_{p_k}$, thus showing that the noise variance accumulates with the number of columns.% and we need to take that into account during estimation.
\end{remark}

\subsection{Sample-level estimation}\label{subsec:augmentation_ladle}

We are now equipped to define the augmentation estimator for estimation of the $k$th latent dimension $d_k$. Let $\textbf{X}_{1,S},\dots,\textbf{X}_{n,S}$ be a sample of i.i.d. $r_k\times \rho_k%\prod_{i\neq k} p_i
$ matrices with elements drawn from the standard normal distribution $\mathcal{N}(0,1)$. We define the augmented $k$-flattenings of the observations $\mathcal{X}^1,\dots,\mathcal{X}^n$ as the $ (p_k+r_k)\times \rho_k$ matrices $\textbf{X}_{i,k}^* := ((\mathcal{X}_{k}^i)',\hat{\sigma}_k\textbf{X}_{i,S}')'$, $i=1,\dots,n$, where $\hat{\sigma}_k^2$ is any consistent estimator of the noise variance $\sigma_k^2$, see Lemma \ref{lemma:lemma_consistency_of_estimators} in Section \ref{sec:asymp} for examples. Let further $\bar{\textbf{X}}_k^*$ be the sample mean of the obtained augmented sample. A sample estimate $\hat{\textbf{M}}_k^*$ of the matrix $\textbf{M}_k^*$ is then
$$
\hat{\textbf{M}}_k^*=\frac{1}{n}\sum_{i=1}^n(\textbf{X}_{i,k}^*-\bar{\textbf{X}}_k^*)(\textbf{X}_{i,k}^*-\bar{\textbf{X}}_k^*)'-\hat\sigma_k^2\textbf{I}_{p_k+r_k},
$$
whose first $p_k$ eigenvectors we denote in the following by $\hat{\boldsymbol\beta}_{k,1}^*, \dots, \hat{\boldsymbol\beta}_{k,p_k}^*$. Mimicking \cite{LuoLi2021} and \cite{RadojicicLietzenNordhausenVirta2021}, we define the normalized scree plot curve,
$$
\hat{\Phi}_k:\{0,1,\dots,p_k\}\to \mathbb{R},\quad \hat{\Phi}_k(l)=\hat\lambda_{k,l+1}/\left(\sum_{i=1}^{l+1}\hat\lambda_{k,i}+1\right),
$$
where $(\hat{\lambda}_{k,1}, \ldots , \hat{\lambda}_{k,p_k}) := (\hat{\sigma}_{k,1}^2 - \hat{\sigma}_k^2, \ldots , \hat{\sigma}^2_{k,p_k} - \hat{\sigma}_k^2)$, %are the eigenvalues of the matrix $(1/n) \sum_{i=1}^n (\mathcal{X}_{k}^i-\bar{\mathcal{X}}_k) (\mathcal{X}_{k}^i-\bar{\mathcal{X}}_k)' - \hat{\sigma}_k^2 \textbf{I}_{p_k} $, where $\bar{\mathcal{X}}_k$ is the sample mean of $\mathcal{X}_{k}^1,\dots,\mathcal{X}_k^n$,
and we take $\hat\lambda_{k,p_k+1} := 0$. However, as the values $\hat{\sigma}_{k,i}^2 - \hat{\sigma}_k^2$ are not necessarily non-negative (unlike their population counterparts, the eigenvalues of $\mathbb{E}(\mathcal{Z}_k\mathcal{Z}_k')$), we suggest using $\hat\lambda_{k,i}=\max\{\hat\sigma_{k,i}^2-\hat\sigma_k^2,0\}$, $i = 1, \ldots , p_k$, instead. In any case, one should proceed with caution as very negative values of $\hat\sigma_{k,i}^2-\hat\sigma_{k}^2$ indicate possible overestimation of the noise variance $\sigma_k^2$. Lemma~\ref{lemma:lemma_consistency_of_estimators} in Section \ref{sec:asymp} lists a number of consistent estimators of noise variance which, however,  possibly behave rather differently. Thus, in Remark~\ref{rem:sigma_missestimation} we discuss the effect of misestimation of the noise variance for the presented procedure.

\begin{remark}\label{rem:sigma_missestimation}
Let $\textbf{X}_S\in\mathbb{R}^{r_k\times\rho_k}$ be the augmented submatrix for the $k$th flattening $\mathcal{X}_k$, having independent $\mathcal{N}(0,\sigma_S^2/\rho_k)$-elements, where $\sigma^2_S > 0$ is now understood to be the (fixed) estimated value of $\sigma_k^2$. %\joni{(Joni's comment: Please check that my addition to the previous sentence is correct.) I am ok with this, as long as it is clear that $\sigma_S$ is a constant.} Then $\mathbb{E}(\textbf{X}_S\textbf{X}_S')=\sigma_S^2\textbf{I}_{r_k}$, see Remark~\ref{rem:rem1} for more insight. %\joni{(Could you still clarify what $\sigma_S^2$ is here?)} \textcolor{blue}{Done}.
Furthermore, \begin{align*}
\textbf{M}_k^*&=\mathbb{E}\{(\textbf{X}_k^*-\mathbb{E}(\textbf{X}_k^*)) (\textbf{X}_k^*-\mathbb{E}(\textbf{X}_k^*))'\}-\sigma_S^2\textbf{I}_{p_k+r_k}\\
&=\begin{pmatrix} \textbf{U}_k \{ \mathrm{E}(\mathcal{Z}_k \mathcal{Z}_k')+(\sigma_k^2-\sigma_S^2)\textbf{I}_{p_k} \} \textbf{U}_k' &\textbf{0}\\
\textbf{0} & \textbf{0}
\end{pmatrix},
\end{align*}
and the eigenvalues of $\textbf{M}_k^*$ are $\lambda_{k,i}+(\sigma_k^2-\sigma_S^2)$, $i=1,\dots,p_k$, where $\lambda_{k,i}=0$ for $i>d_k$, in addition to the $r_k$ zero eigenvalues corresponding to the lower right block. In practice, since the eigenvalues of $\textbf{M}_k^*$ serve as the estimators of the eigenvalues of the positive-definite matrix $\mathbb{E}(\mathcal{Z}_k\mathcal{Z}_k')$, as discussed in Section~\ref{subsec:augmentation_ladle}, we replace $\lambda_{k,i}+(\sigma_k^2-\sigma_S^2)$ with  $\max\{0,\lambda_{k,i}+(\sigma_k^2-\sigma_S^2)\}$, $i=1,\dots,p_k$,  to %compensate for overestimation of $\sigma_k^2$ and to
avoid negative values. Let now $\sigma_S^2=\sigma_k^2+\delta$, where $0\leq\delta<\lambda_{k,d_k}$ and $\delta>0$ corresponds to the amount of overestimation of $\sigma_k^2$. Then, $\max\{0,\lambda_{k,i}+(\sigma_k^2-\sigma_S^2)\}=\lambda_{k,i}-\delta$, $i=1,\dots,d_k$, and $\max\{0,\lambda_{k,i}+(\sigma_k^2-\sigma_S^2)\}=0$, for $i>d_k$, implying that the thresholding preserves the rank $d_k$ while shifting the nontrivial eigenvalues by $-\delta$.  % Thus, in practice, we advise to use $\hat\sigma_{0.5}^2$ as estimator of $\sigma^2$, when Assumption~\ref{ass:median} is satisfied.
%Thus, the method is robust towards slight overestimation of the noise variance, where the amount of the tolerated overestimation equals to the smallest eigenvalue of scatter of the $k$th flattening of the core-tensor.
\end{remark}

Thus, Remark \ref{rem:sigma_missestimation} shows that the method is robust towards slight overestimation of the noise variance, where such behavior is directly related to the thresholding of the eigenvalues of $\hat{\textbf M}_k^*$ from below by $0$. The ``allowed'' amount of overestimation is equal to the smallest non-trivial eigenvalue of $\mathbb{E}(\mathcal{Z}_k\mathcal{Z}_k')$. Though Remark~\ref{rem:sigma_missestimation} explains the effect of the overestimation of the noise variance at the population level, approximation to the phenomenon holds in the sample case.

Moving back to the estimation of the order $d_k$, additional information about it can now be obtained by using the eigenvectors of $\textbf{M}_k^*$. To reduce the effect of randomness in the augmentation, the augmentation procedure is conducted independently
$s_k$ times, and we compute the eigenvectors of $\hat{\textbf{M}}_k^{*}$ for each replicate. For $j=1,\dots,s_k$, we denote by $\hat{\boldsymbol\beta}_{k,i,S}^j$ the augmented part
of the $i$th eigenvector of the matrix $\hat{\textbf{M}}_k^{*j}$ in the $j$th replicate. The eigenvector information is then captured by the function
$$
\hat{f}_k:\{0,1,\dots,p_k\}\to \mathbb{R},\quad \hat{f}_k(i)=\frac{1}{s_k}\sum_{j=1}^{s_k}\|\hat{\boldsymbol\beta}_{k,i,S}^{j}\|^2,
$$
where $\hat{\boldsymbol\beta}_{k,0,S}^{j}:=\textbf{0}$ . %To stabilize the ladle pattern, use the $\frac{1}{s}\sum_{j=1}^s\|\hat{\boldsymbol\beta}_{k,S}^j\|^2$ instead of the squared $L_2$ norm of the augmentation subvector of the $k$-th eigenvector from a single  replication.
%Replication also helps to reduce the effect of randomness in the augmented part.
Finally, we combine the eigenvalue information captured by $\hat{\Phi}_k$ and the eigenvector information in $\hat{f}_k$ to form the final objective function $\hat{g}_k:\{0,1,\dots,p_k\}\to \mathbb{R}$,
\begin{equation}\label{eq:phi_aug}
    \hat{g}_k(j)=\hat{\Phi}_k(j)+\sum_{i=0}^j \hat{f}_k(i) ,
\end{equation}
whose minimizer $\hat d_k$ is taken to be the estimator of the latent dimension $d_k$. This definition of $\hat d_k$ is intuitively clear as, assuming that $d_k > 0$, for any $i<d_k$ the eigenvalue part $\hat{\Phi}_k(i)$ of~\eqref{eq:phi_aug} is large, while the eigenvector part $\hat{f}_k(i)$ is small. For $i > d_k$, the opposite happens and the eigenvalue part is small while the eigenvector part is large, due to Corollary~\ref{cor:cor_3} in Section \ref{sec:asymp}. At the true dimension $i=d_k$ both parts are small, thus implying that the sum curve $\hat{g}_k$ is (at the population level) minimized precisely at $i = d_k$. In the extreme noise case where $d_k=0$, the eigenvalue part in~\eqref{eq:phi_aug} is always negligible, while again due to Corollary~\ref{cor:cor_3}, the eigenvector part is always large, except for $i=0$, in which case it vanishes, causing the minimum to occur at $i=0$. An algorithm for the augmentation estimator is given in Algorithm~\ref{alg::aug} and the augmentation process is visualized in Figure~\ref{fig:AugPorc}.

\begin{algorithm}[ht]
\caption{Augmentation estimator for the dimension $d_k$ of the $k$th mode}\label{alg::aug}
\SetKwInOut{Input}{Input}
        \Input{$\mathcal{X}^1,\dots, \mathcal{X}^n\in\mathbb{R}^{p_1\times\cdots\times p_m}$, centered sample of tensors;}
        \BlankLine
	%Set the number of rows $r>0$ of the augmented submatrix;\;\\
	Set the row dimension $r_k>0$;\\
 	Set the number of augmented replicates $s_k>0$;\\

    Calculate $\hat{\textbf{M}}_k=\frac{1}{n}\sum_{i=1}^n\mathcal{X}_{k}^i{\mathcal{X}_{k}^i}'$, for $k=1,\dots,m$;\\%, the sample  estimate of the covariance matrix $\mathrm{Cov}(\textbf{X})$;\\

	Calculate the estimate $\hat\sigma_k^2$ of the noise variance based on the pooled set of scaled eigenvalues of $\hat{\textbf{M}}_k$, $\displaystyle
	\hat{S}_k=\{\frac{p_i}{p_k}\hat{\sigma}_{i,j_i}^2:i=1,\dots,m,\,j_i=1,\dots,p_i\}.
	$%\CommentSty{E.g.$\hat\sigma_k^2=\mathrm{med}(S_k)$}. %Due to the assumption $d_1+d_2<(p_1+p_2)/2$, it is a reasonable estimate of  $\sigma^2$;\\

    Compute $\hat\lambda_{k,i} = \max\{\hat\sigma_{k,i}^2-\hat\sigma_k^2,0\}$;% to be the estimate of the eigenvalues of $\mathrm{Cov}(\textbf{Z})$;.% \textcolor{red}{Is this now what we agreed on?};

	\For{$i\gets 1$ \KwTo $n$}{
		\For{$j\gets 1$ \KwTo $s_k$}{
        Generate an $r_k \times \rho_k$ matrix $\textbf{X}_{i,S}^j$, with entries drawn i.i.d. from $\mathcal{N}(0, 1)$;

        Define the augmented $i$th observation as
            $\textbf{X}_i^{*j}={({\mathcal{X}_{k}^i}', \hat\sigma_k {\textbf{X}_{i,S}^j}')}';$%\in\mathbb{R}^{(p_1+r)\times p_2};
            }
    }
    \For{$j\gets 1$ \KwTo $s_k$}{
        Compute the eigendecomposition of the $j$th replicated matrix
        \[
        \hat{\textbf{M}}_k^{*j}=\frac{1}{n}\sum_{i=1}^n\textbf{X}_i^{*j}{\textbf{X}_i^{*j}}'-\hat\sigma_k^2\textbf{I}_{p_k + r_k}.
        \]

        Let $\hat{\boldsymbol\beta}_{k,i,S}^{j}$ be the augmented part of the $i$th eigenvector of $ \hat{\textbf{M}}_k^{*j}$;
    }

    Compute the objective function $\displaystyle \hat{g}_k(j)=\hat{\Phi}_k(j)+\sum_{i=0}^j \hat{f}_k(i),$
    where $\hat{\boldsymbol\beta}_{k,0,S}^{j}=\textbf{0}$ and $\hat\lambda_{k,p_k+1}=0$;\\
	
	Return $\hat{d}_k=\mathrm{argmin}\{\hat{g}_k(i):\,i=0,\dots,p_k\}$;% of the order $d_1$;
\end{algorithm}

\begin{figure*}
    \centering
    \includegraphics[width=0.85\linewidth]{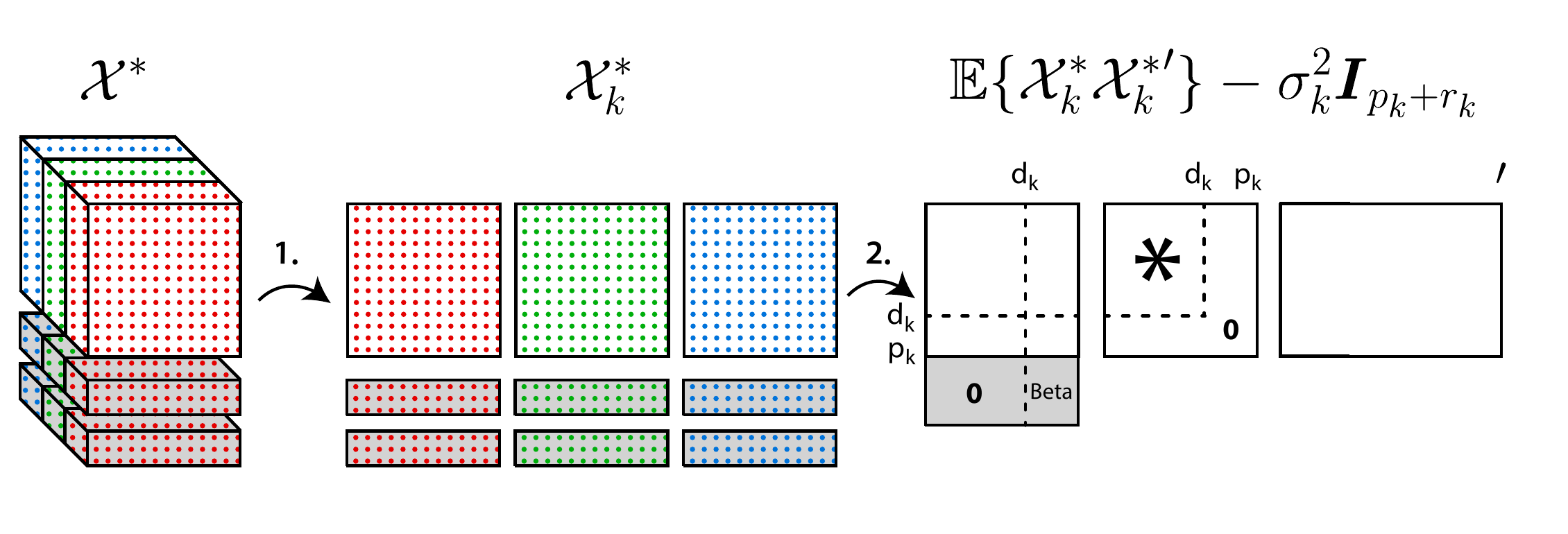}
    \caption{Representation of the augmentation process in the case $m=3$. The parts of the tensors and matrices corresponding to the augmentation have grey background. Step 1 represents the flattening along mode $k$ and Step 2 the subsequent computation of the scatter in that mode along with thr corresponding eigenvalue-eigenvector decomposition.}
    \label{fig:AugPorc}
\end{figure*}

\subsection{Asymptotic properties of the augmentation estimator}\label{sec:asymp}
%\textcolor{blue}{This paragraph we could also move to the appendix, as it is mostly used in proofs.}

%For the results presented in this section we use the following notation;  $\mathcal{X}^1,\dots,\mathcal{X}^n$ is an i.i.d.  sample from Model~\ref{def:tensor_model} and  $\hat\sigma_k^2=\hat{\sigma}_k^2(\mathcal{X}^1,\dots,\mathcal{X}^n)$ is a consistent estimator of the error variance in the $k$th mode. For simplicity of notation, we assume without loss of generality that the mean $\mathcal{M}$ of $\mathcal{X}$ is a zero-tensor.

%\textcolor{blue}{I would move this to the main text as it proves that the eigenvalue part of $\hat{g}_k$ is consistent}.
%\textcolor{blue}{The aim of this section is to prove the consistency of the proposed method.} %\textcolor{purple}{ Could in the previous sentence ``algorithm'' be replaced with ``method''?}
The following corollary gives the asymptotic behavior of the eigenvalue part $\hat{\Phi}_k$ of $\hat{g}_k$, justifying its use.

\begin{cor}\label{cor:cor_2}
Let $\lambda_{k,i}$ and $\hat\lambda_{k,i}$, $i=1,\dots,p_k+r_k$ be the eigenvalues of $\textbf{M}_k^*$ and $\hat{\textbf{M}}_k^*$, respectively. Then, $\hat\lambda_{k,i}\to_{P}\lambda_{k,i}$ for $i=1,\dots,p_k+r_k$, as $n \rightarrow \infty$.
\end{cor}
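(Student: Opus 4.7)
The plan is to reduce the claim to a matrix-norm convergence statement and then apply a standard perturbation inequality. Specifically, I would show that $\hat{\textbf{M}}_k^* \to_P \textbf{M}_k^*$ in, say, Frobenius (equivalently, operator) norm, and then invoke Weyl's inequality
\begin{equation*}
\max_{i} |\hat{\lambda}_{k,i} - \lambda_{k,i}| \leq \|\hat{\textbf{M}}_k^* - \textbf{M}_k^*\|_{\mathrm{op}}
\end{equation*}
(with eigenvalues ordered decreasingly) to conclude. Since $r_k$ is held fixed, both matrices live in a fixed finite-dimensional space of symmetric $(p_k+r_k)\times(p_k+r_k)$ matrices, so no uniformity-in-dimension issues arise.

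For the matrix convergence, I would partition $\hat{\textbf{M}}_k^*$ conformally with the augmentation into a $2\times 2$ block matrix and handle each block separately. The upper-left block is the sample second-moment of the centered $k$-flattenings minus $\hat{\sigma}_k^2 \textbf{I}_{p_k}$; by the law of large numbers, which applies thanks to the finite-second-moment assumption $\mathbb{E}\|\mathcal{X}\|_F^2 < \infty$ in Model~\ref{def:tensor_model}, this converges in probability to $\textbf{U}_k \mathbb{E}(\mathcal{Z}_k\mathcal{Z}_k')\textbf{U}_k'$, using $\hat\sigma_k^2 \to_P \sigma_k^2$ (Lemma~\ref{lemma:lemma_consistency_of_estimators}). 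The off-diagonal block is a sample cross-moment between the observations and the independent Gaussian augmentations; its population expectation is zero by independence, and the LLN again delivers the convergence to the zero matrix. The lower-right block is a sample covariance of i.i.d.\ Gaussian augmentations (scaled appropriately), which converges by the LLN to its deterministic expectation and, after the subtraction of $\hat\sigma_k^2\textbf{I}_{r_k}$, matches the zero lower-right block of $\textbf{M}_k^*$.

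Piecing the three block convergences together using Slutsky's lemma and the continuous mapping theorem (joint convergence of the blocks implies convergence of the whole matrix) then gives $\hat{\textbf{M}}_k^* \to_P \textbf{M}_k^*$. Combining with Weyl's inequality finishes the proof. The main obstacle, such as it is, is bookkeeping: one needs to verify that the per-entry scaling of the Gaussian augmentations chosen in Section~\ref{subsec:augmentation_ladle} is such that the bottom-right block of the sample second-moment matrix lines up with $\sigma_k^2\textbf{I}_{r_k}$ rather than acquiring a spurious $\rho_k$-type factor, so that subtracting $\hat\sigma_k^2\textbf{I}_{r_k}$ actually yields a null block in the limit. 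After this calibration, each ingredient (i.i.d.\ LLN for finite second-moment summands, Slutsky, Weyl) is standard, so no novel probabilistic machinery is required.
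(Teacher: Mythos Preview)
Your proposal is correct and essentially identical to the paper's argument: the paper packages the block-wise law-of-large-numbers convergence $\|\hat{\textbf{M}}_k^*-\textbf{M}_k^*\|\to_P 0$ into a separate Lemma~\ref{lemma:lemma5} (upper-left block via WLLN plus $\hat\sigma_k^2\to_P\sigma_k^2$, off-diagonal block via independence, lower-right block analogously), and the proof of Corollary~\ref{cor:cor_2} is then nothing more than Weyl's inequality applied to the symmetric perturbation $\textbf{R}_{k,n}=\hat{\textbf{M}}_k^*-\textbf{M}_k^*$. Your flagged bookkeeping concern about the $\rho_k$ scaling of the augmentation is precisely the only point requiring care.
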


%Furthermore, let $\textbf{X}_{1,S},\dots,\textbf{X}_{n,S}\in\mathbb{R}^{r_k\times \prod_{i\neq k}}$ be an iid sample from matrix standard normal distribution, independent of $\mathcal{X}^i$, $i=1,\dots,n$. Define now  $\textbf{X}_{i,k}^*=(\mathcal{X}_k^{i'},\hat\sigma\textbf{X}_{i,S}')'$, $i=1,\dots,n$ to be the augmented $k$th flattening of an $i$-th observation. Similarly, let $\mathcal{X}$ be a random tensor from Model~\ref{def:tensor_model}, $\textbf{X}_S$ a random matrix with standard normal matrix distribution independent of $\mathcal{X}_k$ and $\textbf{X}_k^*=(\mathcal{X}_k'\textbf{X}_S')'$ the corresponding augmented random matrix, where we assume without loss of generality that $\mathbb{E}(\mathcal{X}_k)=\textbf{0}$. For those, we define
%$$
%\textbf{M}_k^*=\mathbb{E}(\textbf{X}_k^*(\textbf{X}_k^*)')-\sigma_k^2\textbf{I}_{r_k+p_k},
%$$
%$$
%\hat{\textbf{M}}_k^*=\frac{1}{n}\sum_{i=1}^n\textbf{X}_{i,k}^*(\textbf{X}_{i,k}^*)'-\hat\sigma_k^2\textbf{I}_{r_k+p_k}.
%$$

An implication of Corollary~\ref{cor:cor_2} is Lemma~\ref{lemma:lemma_consistency_of_estimators}, that lists a number of consistent estimators of the noise variance.

\begin{lemma}\label{lemma:lemma_consistency_of_estimators}
Let $\hat\sigma_{k,q}^2$, $q\in(0,1)$, be the $q$th sample quantile of $\hat S_k$ and $\bar\sigma_{k,q}^2$, $q\in(0,1)$, be the mean of those elements of $\hat S_k$ that are smaller than or equal to $\hat\sigma_{k,q}^2$.
\begin{itemize}
    \item[i)] If $d_1+\dots+d_m<(1-q)(p_1+\dots+p_m)$, then $\hat\sigma^2_{k,q_1}$ and $\bar\sigma_{k,q_1}^2$, for any $q_1\leq q$, are consistent estimators of $\sigma^2_k$. %Especially, under \eqref{ass:median}, $\mathrm{med}\{\hat S\}$ and $\bar\sigma^2_{k,0.5}$ are consistent estimators of $\sigma_k^2$.
    \item[ii)] If $d_1+\dots+d_m<p_1+\dots+p_m$, then $\min\{\hat S_k \}$ is a consistent estimator of $\sigma^2_k$.
\end{itemize}
\end{lemma}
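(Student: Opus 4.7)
The strategy is to combine the explicit two-level structure of the population multiset $S_k$ with the componentwise consistency of sample eigenvalues; the remainder is bookkeeping about where $\sigma_k^2$ sits in the sorted population set.

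First I would decompose $S_k$. For each mode $i$, the positive definiteness of $\mathbb{E}(\mathcal{Z}_i \mathcal{Z}_i')$ together with the spherical noise gives $\sigma_{i,j}^2 = \lambda_{i,j} + \sigma_i^2$ with $\lambda_{i,j}>0$ for $j\leq d_i$ and $\sigma_{i,j}^2 = \sigma_i^2$ for $j>d_i$. Multiplying by $p_i/p_k$ and applying identity~\eqref{eq:sigmas_relation}, every ``noise'' element of $S_k$ collapses to the single value $\sigma_k^2$, while every ``signal'' element becomes $(p_i/p_k)\lambda_{i,j} + \sigma_k^2 > \sigma_k^2$. Thus $S_k$ contains exactly $N-D$ elements equal to $\sigma_k^2$ and $D$ elements strictly greater than it, where $N := p_1+\cdots+p_m$ and $D := d_1+\cdots+d_m$, with a strictly positive gap between the two groups.

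Second I would invoke componentwise consistency: by the law of large numbers (using the assumption $\mathbb{E}\|\mathcal{X}\|_F^2 < \infty$) the sample cross-product $(1/n)\sum_i(\mathcal{X}_k^i - \bar{\mathcal{X}}_k)(\mathcal{X}_k^i - \bar{\mathcal{X}}_k)'$ converges entrywise in probability to its population version, and Weyl's inequality transfers this to the eigenvalues, yielding $(p_i/p_k)\hat\sigma_{i,j}^2 \to_P (p_i/p_k)\sigma_{i,j}^2$ for every pair $(i,j)$. This is exactly the mechanism underlying Corollary~\ref{cor:cor_2}, specialised to the non-augmented setting.

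For part (i), the hypothesis is equivalent to $N-D > qN$, so strictly more than a $q$-fraction of $S_k$ is tied at $\sigma_k^2$. For $q_1\leq q$ the population $q_1$-quantile therefore lies strictly inside this plateau and, by the gap established above, is separated from the signal elements of $S_k$. The componentwise convergence plus this gap implies that with probability tending to one the sample quantile $\hat\sigma_{k,q_1}^2$ is picked from the sample counterparts of the noise eigenvalues (those with $j>d_i$), each of which converges to $\sigma_k^2$; consistency of $\bar\sigma_{k,q_1}^2$ follows by the same argument since it is eventually the mean of a fixed finite collection of such quantities. For part (ii), the hypothesis only asks $N>D$, which guarantees at least one element of $S_k$ equals $\sigma_k^2$, so $\min S_k = \sigma_k^2$, and $\min \hat S_k \to_P \sigma_k^2$ follows from the continuous mapping theorem applied to the minimum of the finite family of jointly consistent estimators. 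The main obstacle is essentially bookkeeping: matching sample and population eigenvalues across all modes and ranks so that the ``plateau plus gap'' argument delivers the sample quantile at the correct target; no sharp deviation estimates are needed.
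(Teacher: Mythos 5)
Your proof is correct and follows essentially the same route as the paper's: decompose the population multiset $S_k$ into a ``noise plateau'' of size $N-D$ at exactly $\sigma_k^2$ and $D$ strictly larger signal values, invoke componentwise consistency of the scaled sample eigenvalues (the paper does this by mimicking the Weyl's-inequality argument from Corollary~\ref{cor:cor_2}), and then use the counting condition $N-D>qN$ to place the sample quantile inside the plateau. The only differences are matters of explicitness: you spell out the ``plateau plus gap'' mechanism and invoke the continuous mapping theorem for the minimum, whereas the paper compresses this into the (slightly informal) claim that the estimators ``belong to $\hat S_{0k}$''; the underlying idea is identical.
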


The following theorem illustrates the behaviour of the norms of the augmented parts of the eigenvectors on the sample level, under the assumption of normality for the additive noise $\mathcal{E}$ in Model~\ref{def:tensor_model}, and shows (i) that for $i\leq d_k$ the norms are negligible in probability and (ii) that this is not the case for the later eigenvectors.

\begin{assumption}\label{ass:Gaussian_noise}
The additive noise $\mathcal{E}$ in Model~\eqref{def:tensor_model} has i.i.d. Gaussian entries.
\end{assumption}
%The assumption of the Gaussianity of the noise is in some sense natural, especially if we consider it to originate from multiple unobserved sources. Thus, due to e.g. CLT, precisely the Gaussian noise that naturally occurs in many applications. \joni{(I would maybe remove this paragraph as everyone anyway is used to assuming normality.)} \textcolor{blue}{Agree.}
%\joni{Add normality assumption still.}

\begin{theorem}\label{thm:consistency_eigenvectors}
Let $\hat{\boldsymbol\beta}_{k,i}^*=(\hat{\boldsymbol\beta}_{k,i,1}',\hat{\boldsymbol\beta}_{k,i,S}')'$, $i=1,\dots, p_k+r_k$ be any set of eigenvectors of $\hat{\textbf{M}}_k^*$,
where $\hat{\boldsymbol\beta}_{k,i,S}\in\mathbb{R}^{r_k}$ is the augmented part of the $i$th eigenvector of $\hat{\textbf{M}}_k^*$. Then,
\begin{itemize}
    \item[(i)] $\|\hat{\boldsymbol\beta}_{k,i,S}\|^2=o_P(1)$, $i\leq d_k$.
    \item[(ii)] If additionally Assumption~\eqref{ass:Gaussian_noise} is satisfied, then $ \|\hat{\boldsymbol\beta}_{k,i,S}\|^2 \rightsquigarrow \mathrm{Beta}\{ r_k/2, (p_k - d_k)/2 \}$ for $i > d_k$.
\end{itemize}
\end{theorem}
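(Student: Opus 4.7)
The main tool is the Davis--Kahan $\sin\Theta$ theorem combined with operator-norm consistency. Since the augmented noise is independent of the data, $\textbf{M}_k^*$ is block diagonal with signal block $\textbf{U}_k\mathbb{E}(\mathcal{Z}_k\mathcal{Z}_k')\textbf{U}_k'$ and zero augmented block, so the population projector $\textbf{P}_k$ onto the top-$d_k$ eigenspace annihilates the $r_k$ augmented coordinates. The law of large numbers applied to the sample covariance of the $\textbf{X}_i^*$'s, together with consistency of $\hat\sigma_k^2$ from Lemma~\ref{lemma:lemma_consistency_of_estimators}, yields $\|\hat{\textbf{M}}_k^* - \textbf{M}_k^*\|_{\mathrm{op}}=o_P(1)$, while the spectral gap $\lambda_{k,d_k}>0$ between the signal and null eigenvalues is fixed. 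Davis--Kahan then gives $\|\hat{\textbf{P}}_k - \textbf{P}_k\|_{\mathrm{op}}=o_P(1)$ for the sample projector $\hat{\textbf{P}}_k$. Letting $\textbf{S}$ denote the coordinate projector onto the last $r_k$ rows, we have $\textbf{S}\textbf{P}_k=\textbf{0}$, and for $i\leq d_k$,
\begin{equation*}
\|\hat{\boldsymbol\beta}_{k,i,S}\| \,=\, \|\textbf{S}\hat{\textbf{P}}_k\hat{\boldsymbol\beta}_{k,i}^*\| \,=\, \|\textbf{S}(\hat{\textbf{P}}_k-\textbf{P}_k)\hat{\boldsymbol\beta}_{k,i}^*\| \,\leq\, \|\hat{\textbf{P}}_k-\textbf{P}_k\|_{\mathrm{op}} \,=\, o_P(1),
\end{equation*}
which proves part~(i).

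\textbf{Part (ii).} Here the plan is to exploit the orthogonal invariance of the Gaussian distribution. Under Assumption~\ref{ass:Gaussian_noise} the entries of $\mathcal{E}_k^i$ are i.i.d.\ Gaussian, and the augmentation matrix is an independent Gaussian matrix whose per-entry variance (after multiplication by $\hat\sigma_k$) matches that of the unaugmented noise block in the limit. Consequently, for any orthogonal $\textbf{R}\in O(p_k+r_k)$ that fixes the $d_k$-dimensional signal subspace $V_1 := \mathrm{range}(\textbf{U}_k)\oplus\{\textbf{0}\}^{r_k}$ pointwise, the joint distribution of $\{\textbf{R}\textbf{X}_i^*\}_{i=1}^n$ given the core tensors agrees (asymptotically) with that of $\{\textbf{X}_i^*\}_{i=1}^n$. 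Hence the conditional law of the eigenvectors of $\hat{\textbf{M}}_k^*$ is equivariant under the action of $\textbf{R}$.

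Combining this equivariance with part~(i), for $i>d_k$ we may write $\hat{\boldsymbol\beta}_{k,i}^* = \textbf{B}_0\hat{\textbf{a}}_i + o_P(1)$, where $\textbf{B}_0\in\mathbb{R}^{(p_k+r_k)\times(p_k+r_k-d_k)}$ is a fixed orthonormal basis of $V_1^\perp$ and $\hat{\textbf{a}}_i$ is a random unit vector in $\mathbb{R}^{p_k+r_k-d_k}$ whose limiting distribution is invariant under the full action of $O(p_k+r_k-d_k)$, i.e., Haar-uniform on $\mathbb{S}^{p_k+r_k-d_k-1}$. The conclusion now follows by applying Theorem~\ref{thm:thm0}: the squared norm of the augmented part of $\textbf{B}_0\hat{\textbf{a}}_i$ converges in distribution to $\mathrm{Beta}\{r_k/2,(p_k-d_k)/2\}$, and the residual $o_P(1)$ term is harmless by the continuous mapping theorem and Slutsky's theorem.

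\textbf{Main obstacle.} The most delicate step is bridging the exact rotational invariance available in the Gaussian model with the sample-level projector convergence: in finite samples the null subspace of $\hat{\textbf{M}}_k^*$ is random and its eigenvectors are only identified up to rotations within the null block, whereas Theorem~\ref{thm:thm0} is stated for a fixed basis $\textbf{B}^*_{k,0}$. I would resolve this by conditioning on the span of the top-$d_k$ sample eigenvectors, using Davis--Kahan to identify its orthogonal complement with $V_1^\perp$ up to $o_P(1)$, and then transferring the exact Haar invariance \emph{within} the complement through this perturbation. A second subtlety is verifying that the scaling of the augmentation block in the sample definition of $\hat{\textbf{M}}_k^*$ matches the per-entry noise variance of $\mathcal{E}_k$, which is what couples the augmented and unaugmented coordinates into a single isotropic Gaussian in the null subspace and is what ultimately produces the Beta$\{r_k/2,(p_k-d_k)/2\}$ limit rather than a distribution that mixes the two blocks separately.
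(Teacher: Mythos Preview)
Your proposal is correct and follows essentially the same strategy as the paper. For part~(i), you invoke Davis--Kahan whereas the paper does the equivalent perturbation argument by hand (squaring $\textbf{M}_k^*$ and $\hat{\textbf{M}}_k^*$ and multiplying through by $\hat{\boldsymbol\beta}_{k,i}^*$); both routes yield the same projector-consistency conclusion and yours is arguably the cleaner phrasing. For part~(ii), both you and the paper exploit the Gaussian rotational invariance of the noise within the orthogonal complement of the signal subspace, then combine this with part~(i) to force the noise-eigenvector direction onto the unit sphere in $\mathbb{R}^{p_k+r_k-d_k}$ with Haar law, from which the Beta limit follows.

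The one place where the paper is more explicit than your sketch is precisely the issue you flag as the ``second subtlety'': because the augmentation is scaled by $\hat\sigma_k$ rather than $\sigma_k$, the invariance you need is only approximate for finite $n$. The paper handles this by introducing an intermediate matrix $\hat{\textbf{M}}^*_{k,0}$ in which $\hat\sigma_k$ is replaced by the true $\sigma_k$ (and the centering by the mean is dropped); for this matrix the last $p_k+r_k-d_k$ rows of $\textbf{W}'\textbf{X}_i^*$ are \emph{exactly} i.i.d.\ Gaussian with a common variance, so the distributional invariance under $\mathrm{diag}(\textbf{I}_{d_k},\textbf{V}')$ is exact for every $n$, not merely asymptotic. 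The Beta limit is then established first for the eigenvectors of $\hat{\textbf{M}}^*_{k,0}$, and transferred to $\hat{\textbf{M}}^*_{k}$ via $\hat{\textbf{M}}^*_{k}=\hat{\textbf{M}}^*_{k,0}+o_P(1)$ and continuity of eigenvectors at simple eigenvalues. Your plan to ``condition and transfer through the perturbation'' amounts to the same thing, but making the exact-invariance step explicit (as the paper does) is what turns the heuristic ``agrees asymptotically'' into a rigorous argument.
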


Corollary~\ref{cor:cor_3} further illustrates the behaviour of the norms of the augmented parts $\hat{\boldsymbol{\beta}}_{k,i,S}$ for
$i > d_k$. %As it follows straightforwardly from Theorem~\ref{thm:consistency_eigenvectors}, we present it without proof.

%\joni{(This and another Corollary both have the same label so could you Una still check that the references are correct?)} \textcolor{blue}{Done.}
\begin{cor}\label{cor:cor_3}
Let $\hat{\textbf{M}}_k^*$ be as defined above and let $\hat{\boldsymbol\beta}_{k,i}^* = (\hat{\boldsymbol\beta}_{k,i,1}', \hat{\boldsymbol\beta}_{k,i,S}')'\in\mathbb{R}^{p_k+r_k}$, $i=1,\dots,p_k+r_k$, be an eigenvector of $\hat{\textbf{M}}_k^*$ corresponding to its $i$th eigenvalue. Then, under  Assumption~\eqref{ass:Gaussian_noise}, for $i>d_k$ and for every $\varepsilon>0$, $\displaystyle\lim_{\varepsilon\to 0^+}\mathbb{P}(\|\hat{\boldsymbol\beta}_{k,i,S}\|^2>\varepsilon)\to 1$, as $n\to\infty$.% \joni{(Is it absolutely certain that this Corollary holds? By this I mean that the ``double'' limit makes it at least for me kind of tricky to see what is happening. For example, if $\hat{\boldsymbol\beta}_{k,i,S}$ had a positive probability of being zero for all finite $n$, then this result would not hold (this scenario is probably impossible under the normality assumption but I'm still wondering whether something similar could happen?)}
    %\item[ii)]  For every $\varepsilon>0$, if $r_k=o(n)$,   $\mathbb{P}(\|\hat{\boldsymbol\beta}_S\|^2\geq 1-\varepsilon)\to 1$, as $r_k,\, n\to\infty$.
    % \item[iii)] In high-dimensional regime, for every $\varepsilon>0$, if $p_k-d_k=o(r_k)$ and $r_k=o(n)$,
    % then $\mathbb{P}(\|\hat{\boldsymbol\beta}_S\|^2\geq 1-\varepsilon)\to 1$, as $r_k,\,n\to\infty$.
\end{cor}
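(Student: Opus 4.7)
The plan is to read Corollary 3 as the two-step limit statement it appears to be: for each fixed $\varepsilon>0$, first drive $n\to\infty$ to obtain a limiting probability, and then let $\varepsilon\to 0^+$ to show that this limit converges to $1$. With that reading, essentially all of the work has already been done in part (ii) of Theorem \ref{thm:consistency_eigenvectors}, and the remainder is a soft consequence of the continuous mapping / portmanteau theorem combined with properties of the limiting Beta law.

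The first step is to invoke Theorem \ref{thm:consistency_eigenvectors}(ii), which gives $\|\hat{\boldsymbol\beta}_{k,i,S}\|^2 \rightsquigarrow B$ as $n\to\infty$, where $B\sim\mathrm{Beta}\{r_k/2,(p_k-d_k)/2\}$, for any index $i>d_k$. Because the Beta distribution is absolutely continuous, every $\varepsilon>0$ is a continuity point of the distribution function $F_B$, so by the portmanteau theorem
\begin{equation*}
\mathbb{P}\bigl(\|\hat{\boldsymbol\beta}_{k,i,S}\|^2>\varepsilon\bigr)\longrightarrow \mathbb{P}(B>\varepsilon)=1-F_B(\varepsilon)\qquad\text{as }n\to\infty.
\end{equation*}

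The second step is to let $\varepsilon\to 0^+$. Since $B$ is supported on $(0,1)$ and has no atom at $0$, continuity of $F_B$ at $0$ yields $F_B(\varepsilon)\to F_B(0)=0$, hence $\mathbb{P}(B>\varepsilon)\to 1$. Combining the two displays gives the claim. The only mild subtlety is to be careful that the convergence in Theorem \ref{thm:consistency_eigenvectors}(ii) holds for each fixed $i>d_k$, so one should handle each eigenvector index separately rather than uniformly in $i$; since there are only finitely many such indices, this causes no difficulty.

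The main obstacle, such as it is, lies not in this corollary but was already resolved in proving Theorem \ref{thm:consistency_eigenvectors}(ii), namely identifying the exact Beta limiting law for the augmented subvectors of the noise eigenvectors (which in turn rests on the Gaussianity of the noise via Assumption \ref{ass:Gaussian_noise} and the rotational invariance it provides). Given that theorem, the present corollary is essentially a one-line consequence dressed up as a two-limit argument, and the proof amounts to no more than combining weak convergence at continuity points with the absolute continuity of the Beta distribution on $(0,1)$.
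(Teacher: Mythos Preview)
Your argument is correct for the iterated limit $\lim_{\varepsilon\to 0^+}\lim_{n\to\infty}\mathbb{P}(\|\hat{\boldsymbol\beta}_{k,i,S}\|^2>\varepsilon)=1$, and the mechanics (portmanteau at continuity points, absolute continuity of the Beta law at $0$) are exactly right. However, the paper reads the statement in the opposite order: as written, ``$\lim_{\varepsilon\to 0^+}\mathbb{P}(\cdot>\varepsilon)\to 1$ as $n\to\infty$'' means that the \emph{inner} limit is in $\varepsilon$ and the \emph{outer} limit is in $n$, i.e.\ $\lim_{n\to\infty}\lim_{\varepsilon\to 0^+}F_n(\varepsilon)=0$ with $F_n$ the CDF of $\|\hat{\boldsymbol\beta}_{k,i,S}\|^2$. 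Your reading reverses this, which is why your proof is shorter.

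The paper therefore inserts one extra ingredient that you omit: it invokes the Moore--Osgood theorem to interchange the two limits. The justification is that $F_n\to F$ \emph{uniformly} on $[0,1]$, which follows because the limit CDF $F$ of $\mathrm{Beta}\{r_k/2,(p_k-d_k)/2\}$ is continuous (pointwise convergence of CDFs to a continuous CDF is automatically uniform). With uniform convergence in hand, Moore--Osgood equates the two iterated limits, and then the inner computation reduces exactly to your second step, $\lim_{\varepsilon\to 0^+}F(\varepsilon)=F(0)=0$.

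So the difference is purely about which order of limits is being asserted. Your route is more elementary and perfectly adequate for the $\lim_\varepsilon\lim_n$ reading; the paper's route costs one additional (standard) uniform-convergence/Moore--Osgood step but establishes the $\lim_n\lim_\varepsilon$ reading that the displayed statement literally encodes. It is also worth noting that the subsequent application in the proof of Theorem~\ref{thm:consistency_of_d} actually uses the diagonal version $\mathbb{P}(\|\hat{\boldsymbol\beta}_{k,i,S}\|^2>\varepsilon_n)\to 1$ for any $\varepsilon_n\downarrow 0$; this follows immediately from the uniform convergence used in the paper's argument, but not directly from your iterated-limit computation alone.
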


Interestingly, the limiting distribution of $\|\hat{\boldsymbol{\beta}}_{k,i,S}\|^2$, $i>d_k$, in Theorem~\ref{thm:consistency_eigenvectors} does not depend directly on the dimension $p_k$, but rather on the ``amount of noise'' $p_k-d_k$ in the $k$th mode. This implies that the more noise components there are in the $k$th mode, the more difficult it is to differentiate between the signal and the noise eigenvectors using the norms of the corresponding augmented parts; see Figure~\ref{fig:fig_beta} for more insight.

Finally, the following theorem proves the validity of the method, in the sense of the consistency of the estimated latent dimensions, under the assumption of normality of the additive noise.

\begin{theorem}\label{thm:consistency_of_d}
Let Assumption~\eqref{ass:Gaussian_noise} be satisfied and let $\hat d_k$ be the estimator of the unknown dimension $d_k$ defined in~\eqref{eq:phi_aug}. Then,
$$
\lim_{n\to\infty}\mathbb{P}(\hat d_k=d_k)=1, \quad k=1,\dots,m.
$$
\end{theorem}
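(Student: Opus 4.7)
The strategy is an argmin-consistency argument: it suffices to show that for every $j \in \{0, 1, \ldots, p_k\}$ with $j \neq d_k$, the event $\{\hat{g}_k(j) > \hat{g}_k(d_k)\}$ has probability tending to one. The analysis splits naturally into the cases $j < d_k$ and $j > d_k$, with the degenerate case $d_k = 0$ treated separately.

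First consider $j < d_k$, so that $d_k \geq 1$. By Corollary~\ref{cor:cor_2} and Lemma~\ref{lemma:lemma_consistency_of_estimators}, together with the continuous mapping theorem,
$$
\hat{\Phi}_k(j) \to_P \frac{\lambda_{k,j+1}}{\sum_{i=1}^{j+1}\lambda_{k,i} + 1} > 0,
$$
since $\lambda_{k,j+1} > 0$, whereas $\hat{\Phi}_k(d_k) \to_P 0$ because $\lambda_{k,d_k+1} = 0$. Part (i) of Theorem~\ref{thm:consistency_eigenvectors} further yields $\sum_{i=j+1}^{d_k} \hat{f}_k(i) = o_P(1)$, as each summand is an average of $s_k$ squared norms that are each $o_P(1)$. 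Combining these facts, $\hat{g}_k(j) - \hat{g}_k(d_k) \to_P \lambda_{k,j+1}/(\sum_{i=1}^{j+1}\lambda_{k,i}+1) > 0$, which is what we need.

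For $j > d_k$, the non-negativity of $\hat{\Phi}_k$ and of each $\hat{f}_k(i)$ yields the lower bound
$$
\hat{g}_k(j) - \hat{g}_k(d_k) \geq -\hat{\Phi}_k(d_k) + \hat{f}_k(d_k + 1),
$$
and the first term is $o_P(1)$ by the argument above. For the second term, Corollary~\ref{cor:cor_3} implies that, for each $\eta > 0$, one can choose $\varepsilon > 0$ with $\liminf_n \mathbb{P}(\|\hat{\boldsymbol\beta}_{k, d_k+1, S}^{j}\|^2 > \varepsilon) \geq 1 - \eta/s_k$ for every replicate. A union bound over the $s_k$ replicates (or, more simply, the pointwise bound $\hat{f}_k(d_k+1) \geq \|\hat{\boldsymbol\beta}_{k, d_k+1, S}^{1}\|^2 / s_k$) then forces $\liminf_n \mathbb{P}(\hat{f}_k(d_k+1) > \varepsilon') \geq 1 - \eta$ for a suitable $\varepsilon' > 0$, from which $\mathbb{P}(\hat{g}_k(j) > \hat{g}_k(d_k)) \to 1$ follows by letting $\eta \to 0$. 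The extreme case $d_k = 0$ is handled analogously: $\hat{g}_k(0) = \hat{\Phi}_k(0) \to_P 0$, while $\hat{g}_k(j) \geq \hat{f}_k(1)$ is bounded away from zero in probability for every $j \geq 1$.

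I expect the main difficulty to lie in the $j > d_k$ case. Unlike in Theorem~\ref{thm:consistency_eigenvectors}(i), the noise-eigenvector norms do not converge to a constant but to a nondegenerate Beta limit, so no simple convergence-in-probability statement is available for the eigenvector part at indices $i > d_k$. The argument has to exploit the ``bounded away from zero in probability'' formulation of Corollary~\ref{cor:cor_3}, combined with the fact that, for every $\eta > 0$, the gap $\hat{g}_k(j) - \hat{g}_k(d_k)$ exceeds a small fixed threshold with probability at least $1 - \eta$, which, being valid for every $\eta$, pushes the probability of the right ordering to one. Some attention must also be paid to the dependence structure among the $s_k$ augmentation replicates, but this is sidestepped by working with one replicate at a time, as only a lower bound on $\hat f_k(d_k+1)$ is required.
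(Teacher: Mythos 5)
Your proof is correct and follows essentially the same route as the paper's: show $\hat{g}_k(d_k) = o_P(1)$, show $\hat{g}_k(j)$ is bounded away from zero in probability for $j\neq d_k$ (via the scree part $\hat{\Phi}_k(j)$ when $j<d_k$, via the eigenvector part $\hat{f}_k(d_k+1)$ and Corollary~\ref{cor:cor_3} when $j>d_k$), and conclude by a union bound over the finitely many $j$. Your write-up is, if anything, slightly more careful than the paper's: you compare the pairwise differences $\hat{g}_k(j)-\hat{g}_k(d_k)$ directly, treat the $d_k=0$ edge case explicitly, and handle the $s_k>1$ replicates via the lower bound $\hat{f}_k(d_k+1)\geq\|\hat{\boldsymbol\beta}_{k,d_k+1,S}^1\|^2/s_k$, whereas the paper's proof is stated only for $s=1$ and is more schematic about why the three modewise limits together imply consistency.
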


%\textcolor{blue}{This result essentially implies that as the sample size grows, we find the true dimension with positive probability. }
\section{Bootstrap-based ladle estimator}\label{sec:bootstrap}

As a competitor to the augmentation strategy presented in Section~\ref{subsec:augmentation_ladle}, we introduce a generalization of the bootstrap-based ``ladle''-technique for extracting information from the eigenvectors of the variation matrix presented in~\cite{LuoLi2016} for vector-valued observations. The general idea is to use bootstrap resampling techniques to approximate the variation of the span of the first $k$ eigenvectors of the corresponding sample scatter matrix, where high variation of the span indicates that the chosen eigenvectors belong to the same eigenspace, i.e., that the difference between the corresponding eigenvalues is small, see~\cite{YeWeiss2003}. The information obtained from the eigenvectors is then combined with the one from the eigenvalues of the variation matrix, as in the augmentation estimator. As in \cite{LuoLi2016}, we refer to this composite estimator as the bootstrap ladle estimator, due to a specific ladle shape of the associated plots. %\joni{(This sounds a bit like this estimator would be different from the original ladle but that is not the case?)}  \textcolor{blue}{No, it is as in Luo and Li, although they have it defined for vectors only.}
Namely, a ladle-shaped curve is obtained when the bootstrap ladle estimator is plotted as a function of the unknown dimension. Extending the bootstrap ladle estimator of \cite{LuoLi2016} to tensor-valued observations, we obtain estimators for the orders $d_k$, $k=1,\dots,m$, where the information contained in the eigenvalues is extracted similarly as in Algorithm~\ref{alg::aug}.

More precisely, for centered, independent realizations $\{\mathcal{X}^1,\dots,\mathcal{X}^n\}$ of a zero-mean tensor from Model~\eqref{def:tensor_model}, let $\hat{\textbf{M}}_k$ and $\hat{\sigma}_{k,i}^2$, $k=1,\dots,m$, $i=1,\dots,p_k$, be defined as in Algorithm~\ref{alg::aug}, and let $\hat{\textbf{B}}_{j,k}$ be a matrix containing any first $j$ eigenvectors of $\hat{\textbf{M}}_k$. We define further $\hat{\phi}_{k,\rm{B}}:\{0,1,\dots,p_k-1\}\to \mathbb{R},$ with
\begin{align*}
 \hat{\phi}_{k,\rm{B}}(j)=\hat\sigma_{k,j+1}^2/\left(\sum_{i=1}^{p_k-1}\hat\sigma_{k,i}^2+1\right),
\end{align*}
observing that the value of $\hat{\phi}_{k,\rm{B}}(j)$ is large for $j<d_k$ and small, but not zero, for $j\geq d_k$, for the reasons presented in Section~\ref{subsec:augmentation_ladle}.

As mentioned earlier, the eigenvalue information is in ladle supplemented with information taken from the eigenvectors of $\hat{\textbf{M}}_k$ using a bootstrap technique. For the $i$-th centered bootstrap sample, %$i=1,\dots, s_k$
$\{\mathcal{X}_{i}^{1*},\dots,\mathcal{X}_{i}^{n*}\}$, let $\textbf{M}_{k}^{i*}=\frac{1}{n}\sum_{j=1}^n\mathcal{X}_{i,k}^{j*}(\mathcal{X}_{i,k}^{j*})'$ be the scatter matrix of the $k$-flattening of the $i$th bootstrap sample. %\joni{(Should this be centered?)} \textcolor{blue}{It is assumed that the sample is centered, but I emphasized here as well}.
Furthermore, let   $\textbf{B}_{j,k}^{i*}=(\boldsymbol\beta_{1,k}^{i*},\dots,\boldsymbol\beta_{j,k}^{i*})$ be a matrix of any first $j$ eigenvectors, belonging to the $j$ largest eigenvalues of $\textbf{M}_k^{i*}$, $j=1,\dots , p_k-1$. Define further $\hat{f}_{k,\rm{B}}:\{0,1,\dots,p_k-1\}\to \mathbb{R}$, with $\hat{f}_{k,\rm{B}}(0):=0$ and
\[
\hat{f}_{k,\rm{B}}(j)=\frac{1}{s_k}\sum_{i=1}^{s_k}\left(1-|\det(\hat{\textbf{B}}_{j,k}'\textbf{B}_{j,k}^{i*})|\right),\,\text{ for }j>0,
\]
where $s_k$ is the number of bootstrap samples. Note that $\hat{f}_{k,\rm{B}}(p_k)=0$, since $\hat{\textbf{B}}_{j,k}$ and  $\textbf{B}_{j,k}^{i*}$ both span the same space $\mathbb{R}^{p_k}$. The bootstrap ladle estimator $\hat d_k$ of $d_k$ is then defined as the minimizer of $\hat{g}_{k,\rm{B}}:\{0,1,\dots,p_k-1\}\to\mathbb{R}$, where %$\hat{g}_{k,\rm{B}}(p_k)=0$ and
\[
\hat{g}_{k,\rm{B}}(j)=\hat{\phi}_{k,\rm{B}}(j)+\hat{f}_{k,\rm{B}}(j)/\left(\sum_{i=1}^{p_k-1}\hat{f}_{k,\rm{B}}(i)+1\right ),%,\,\text{for }j<p_k.
\]
see Algorithm \ref{alg::boot} for a detailed algorithm.

As the value of the objective function $\hat{g}_{k,\rm{B}}$ is not defined at $p_k$, it is assumed that compression is possible in every mode, which is different from the augmentation estimator where compression in at least one mode is compulsory.
%at least some compression can be done in the $k$th mode, which is in  line with what was assumed for the augmentation-based estimator.
\cite{LuoLi2016} argue further that if $p_k$ is large (e.g., $p_k>10$), the normalizing constants for the eigenvalue and eigenvector parts of the objective function should be replaced by $\sum_{i=1}^{\floor{p_k/\log(p_k)}}\hat{\sigma}^2_{k,i}+1$ and $\sum_{i=1}^{\floor{p_k/\log(p_k)}}\hat{f}_{k,\rm{B}}(i)+1$, respectively. The reason is that if $p_k$ is very large, the normalization constant of the eigenvector part of $\hat{g}_{k,\rm{B}}$ increases and weights down the bootstrap part compared to $\hat{\phi}_{k,\rm{B}}$. Therefore, it is usually beneficial to optimize $\hat{g}_{k,\rm{B}}$ only up to some $q_k<p_k$, and $q_k=\floor{p_k/\log(p_k)}$ seems, according to \cite{LuoLi2016}, justifiable in many applications.

\begin{algorithm}[ht]
\caption{Bootstrap ladle estimator for the dimension $d_k$ of the $k$th mode.}\label{alg::boot}
\SetKwInOut{Input}{Input}
        \Input{$\mathcal{X}^1,\dots, \mathcal{X}^n\in\mathbb{R}^{p_1\times\cdots\times p_m}$, centered sample of tensors;}
        \BlankLine
	%Set the number of rows $r>0$ of the augmented submatrix;\;\\
	Set the number of bootstrap samples $s_k>0$;\\
    Calculate the scatter $\hat{\textbf{M}}_{k}$ of the $k$-flattening of the sample $\mathcal{X}^1,\dots, \mathcal{X}^n$;\\
    Calculate the eigendecomposition of $\hat{\textbf{M}}_{k}$ and denote by  $\hat{\textbf{B}}_{j,k}$ any matrix of first $j$ eigenvectors of $\hat{\textbf{M}}_k$;

	\For{$i\gets 1$ \KwTo $s_k$}{
       Sample with repetition the $i$-th (centered) bootstrap sample $\{\mathcal{X}_{i}^{1*},\dots,\mathcal{X}_{i}^{n*}\}$;\\
       Calculate the scatter $\textbf{M}_{k}^{i*}=\frac{1}{n}\sum_{j=1}^n\mathcal{X}_{i,k}^{j*}(\mathcal{X}_{i,k}^{j*})'$ of the $k$-flattening of the $i$th bootstrap sample;\\

     %  $$
      % \textbf{M}_{k}^{i*}=\frac{1}{n}\sum_{j=1}^n\mathcal{X}_{i,k}^{j*}(\mathcal{X}_{i,k}^{j*})';
      % $$
       Calculate the eigendecomposition of $\textbf{M}_{k}^{i*}$ and take $\textbf{B}_{j,k}^{i*}$ to be any matrix of first $j$ eigenvectors of $\textbf{M}_k^{i*}$;
    }
    \uIf{$p_k\leq 10$}{Set $q_k\leftarrow p_k-1$\; }
  \Else{Set $q_k\leftarrow\floor{p_k/\log(p_k)}$\;}

    The objective function is $ \hat{g}_{k,\rm{B}}(j):\{0,1,\dots q_k\}\to\mathbb{R}$, with $\hat{g}_{k,\rm{B}}(0)=0$, and, for $j>0$, $\displaystyle
   \hat{g}_{k,\rm{B}}(j)=$
   $\hat\sigma_{k,j+1}^2/\left(\sum_{i=1}^{q_k}\hat\sigma_{k,i}^2+1\right)+\hat{f}_{k,\rm{B}}(j)/\left(\sum_{i=1}^{q_k}\hat{f}_{k,\rm{B}}(i)+1\right );$\\
	
	Return $\hat{d}_k=\mathrm{argmin}\{\hat{g}_{k,\rm{B}}(j):\,i=0,\dots,q_k\}$;

\end{algorithm}

A clear advantage of the bootstrap ladle estimator over the augmentation one is that no noise variance estimation is required and one can use the matrix $\hat{\textbf{M}}_k$ directly to draw inference on the order $d_k$. However, as shown in a simulation study in~\cite{RadojicicLietzenNordhausenVirta2021}, it is computationally more demanding and less accurate and is therefore omitted from the real data analysis in Section~\ref{sec:real_data_example} (but it is still included in the simulations in Section \ref{sec:Simulations}).

Further detailed intuition on the bootstrap ladle estimator can be found in~\cite{LuoLi2016}.

\section{Numerical results}\label{sec:numerical_results}
The analysis of both simulated and real data  was conducted using R~\cite{R} jointly with the packages ICtest~\cite{RICtest}, MixMatrix~\cite{RMixMode} and tensorBSS~\cite{RtensorBSS}. To obtain the augmentation and ladle estimates of the latent dimensions we use
Algorithms~\ref{alg::aug}
and~\ref{alg::boot}, respectively, which are available in the package tensorBSS~\cite{RtensorBSS}. %in\textcolor{purple}{should we advertise here that tensorBSS contains the suggested methods?}
\subsection{Simulation study}\label{sec:Simulations}
In the simulation study, the data are generated from the  model
\begin{equation}\label{eq:model_simus}
\mathcal{X}=\mathcal{Z}\times_{i=1}^3\textbf{U}_i+\mathcal{E},
\end{equation}
where $\mathcal{Z}=\mathcal{Z}_0\times_{i=1}^3\textbf{A}_i$, $\mathcal{E}=\sigma\mathcal{E}_0\times_{i=1}^3\textbf{V}_i$ and $\mathcal{Z}_0$ has i.i.d. $t(3)$ entries, $\mathcal{E}_0$ has i.i.d. $\mathcal{N}(0,1)$ entries, $\textbf{A}_i\in\mathbb{R}^{d_i\times d_i}$ %\joni{(Joni's comment: Should $\textbf{A}_i$ be of size $d_i\times d_i$? Yes!)}
and $\textbf{V}_i\in\mathbb{R}^{p_i\times p_i}$, $i=1,2,3$, are full column-rank and orthogonal matrices, respectively, with $d_1 = 3,\, d_2 = 5,\, d_3 = 10,\, p_1 = 5,\, p_2 = 15,\, p_3 = 20$. Furthermore, we consider three different values for the noise variance $\sigma^2=0.1,\,0.5,\,1.$
Thus,
\begin{equation}
\mathbb{E}(\mathcal{Z}_k\mathcal{Z}_k')=%\tfrac{\nu}{\nu-2}
\tfrac{5}{3}\textbf{A}_k\textbf{A}_k'\prod_{i\neq k}\mathrm{tr}(\textbf{A}_i\textbf{A}_i'),\,\mathbb{E}(\mathcal{E}_k\mathcal{E}_k')=\sigma^2\prod_{i\neq k}p_i\textbf{I}_{p_k}.
\end{equation}
%\textcolor{blue}{I would here skip very small s and r maybe due to theory - or we can keep them. Take r = 1, 5, 15, 50, 100. s= 10, 50. Play with few values of the noise variance - smallest eigenvalue is not to small. }
The eigenvalues of $\mathbb{E}(\mathcal{Z}_k\mathcal{Z}_k')$ are for mode 1 $\{$5.75, 12.93, 22.99$\}$, for mode 2 $\{$5.39,  5.94,  8.41,  9.81, 12.12$\}$ and for mode 3 $\{$2.74, 3.02, 3.31, 3.62, 3.94, 4.28, 4.63, 4.99, 5.37, 5.76$\}$. The eigenvalues of $\mathbb{E}(\mathcal{E}_k\mathcal{E}_k')$, for $k=1,\,2,\,3$ and $\sigma^2\in\{0.1,0.5,1\}$, along with the resulting
signal-to-noise ratios (SNR) for Model~\eqref{eq:model_simus} are given in Table~\ref{tab:tab23} %\footnote{For fixed mode $k=1,\,2,\,3$, all eigenvalues of $\mathbb{E}(\mathcal{E}_k\mathcal{E}_k')$ are equal.}, %Table II also gives signal to noise ratio (SNR) for Model~\eqref{eq:model_simus} for $k=1,\,2,\,3$ and $\sigma^2\in\{0.1,0.05,1,3\}$,
where we define $\mathrm{SNR}_k=\|\mathbb{E}(\mathcal{Z}_k\mathcal{Z}_k')\|^2/\|\mathbb{E}(\mathcal{E}_k\mathcal{E}_k')\|^2$ to be the ratio of the total variations of the signal and the noise components in the $k$th mode. Note that for a fixed mode $k=1,\,2,\,3$, all eigenvalues of $\mathbb{E}(\mathcal{E}_k\mathcal{E}_k')$ are equal. The table shows that in all settings the SNR is quite small and the simulation settings are quite challenging.

%\joni{(Joni's comment: I find our definition of SNR as bit odd since, if we double the noise standard deviation but keep the signal intact, then our $\mathrm{SNR}_k$ is multiplied by 1/16, whereas intuitively I would expect it to cut to half. So maybe we should still take the fourth root of $\mathrm{SNR}_k$? This would probably also make the settings seem less challenging.) I agree that it is a bit odd, and I think the reason for that is that the signal and the noise are not of the same dimension - thus, taking e.g. 4th root would not really fix the problem in general. We could e.g. standardize it with the corresponding number of columns. That way, we would get the ratio of not the total variations, but the average variations. What do you think? Another measure of SNR would
%e.g. be ratio of the smallest signal and the largest noise eigenvalue - but that would always be bigger then $1$.}

%\begin{table}
%\label{tab:tab1}
%\caption{Eigenvalues of %$\mathbb{E}(\mathcal{Z}_k\mathcal{Z}_k')$, $k=1,\,2,\,3$.}
%\begin{tabular}{ll}
%\hline
%Mode & Eigenvalues \\

%\hline

%\hline
%$1$      & 5.75, 12.93, 22.99 \\
%$2$      & 5.39,  5.94,  8.41,  9.81, 12.12 %\\
%$3$       & 2.74, 3.02, 3.31, 3.62, 3.94, %4.28, 4.63, 4.99, 5.37, 5.76 \\

%\hline
%\end{tabular}
%\end{table}

\begin{table}[ht]
\centering
\caption{Eigenvalues of $\mathbb{E}(\mathcal{E}_k\mathcal{E}_k')$ and SNR for Model~\eqref{eq:model_simus}, $k=1,\,2,\,3$.}
\label{tab:tab23}
\begin{tabular}{r|rrr|rrr}
\hline
&\multicolumn{3}{c}{Eigenvalues}&\multicolumn{3}{c}{SNR} \\

\cline{2-4}\cline{5-7}

$\sigma^2$ & Mode 1 & Mode 2 & Mode 3  & Mode 1 & Mode 2 & Mode 3 \\

\cline{1-4}\cline{5-7}

$0.1$      & 30    & 10  &   7.5 & 0.162 &0.756& 0.650 \\
$0.5$      & 150        & 50    &  37.5 & 0.006 &0.030 &0.026 \\
$1.0$       & 300     & 100    & 75.0 &  0.002 &0.007 &0.006 \\
%$3$       & 900     & 300    & 225  \\
\hline
\end{tabular}
\end{table}

For each of the $3$ variance values, we simulate 1000 data sets of size $n=1000$ from Model~\eqref{eq:model_simus}, where in each simulated data sample $\textbf{V}_i$, $i=1,2,3$, are randomly generated orthogonal matrices. The invertible matrices specifying the covariance structure of the core are of the form $\textbf{A}_i=\textbf{W}_i\textbf{D}_i\textbf{W}_i'$, where $\textbf{W}_i\in\mathbb{R}^{d_i\times d_i}$, $i=1,\dots,3$, are randomly generated orthogonal matrices and $\textbf{D}_1\approx\rm{diag}($1.857, 2.785, 3.714$)$, $\textbf{D}_2\approx\rm{diag}($1.797, 1.887, 2.247, 2.427, 2.696$)$ and $\textbf{D}_3\approx1.282\,\rm{diag}($1.05,1.1,\dots,1.45$)$. Furthermore, the mixing matrices $\textbf{U}_i\in\mathbb{R}^{p_i\times d_i}$ are taken to be the first $d_i$ columns of randomly generated orthogonal matrices in $\mathbb{R}^{p_i\times p_i}$, $i=1,2,3$.

The augmentation estimator has two tuning parameters, $s_k$ and $r_k$, for the estimation of the latent dimensions $d_1,\,d_2,\,d_3$. Additionally, one also has to choose the used estimator of the noise variance.
It is natural to choose the number of replications $s_k>0$ to be large to reduce variation in the results. Based on the simulation results of \cite{RadojicicLietzenNordhausenVirta2021}, we choose $s_k=50$, $k=1,\,2,\,3$, as that seems a good compromise between stability and computation time and since the choice of $s_k$ was in \cite{RadojicicLietzenNordhausenVirta2021} deemed to be less crucial than the choice of $r_k$. %\textcolor{purple}{In the vector case, \cite{LuoLi2021} propose to use  $r\approx p_1/5$ which in light of our results in at the population level, see  Corollary~\ref{cor:cor_1}, seems may be small and indicates that $r_k$ should indeed be rather large. THIS PREVIOUS SENTENCE IS NOT CLEAR TO ME}
In the vector case, \cite{LuoLi2021} propose to use  $r\approx p_1/5$ which in light of our results at the population level, see  Corollary~\ref{cor:cor_1},
seems insufficiently large. This effect is further visualized in  Figure~\ref{fig:fig_beta} which also shows how, in practice, the choice of $r_k$ might be guided by the difference between the signal dimension and the full data dimension. However, the optimal choice of $r_k$ should be investigated further, most likely in the high-dimensional framework, and is beyond the scope of this paper.

%small and indicates that $r_k$ should indeed be rather large. THIS PREVIOUS SENTENCE IS NOT CLEAR TO ME} That is further visualized in  Figure~\ref{fig:fig_beta} which also shows how in practice, the choice of $r_k$ might be guided by the difference between the signal and data dimension.

%and $r_k$ should be large \joni{(This and the following sentence contain repetition.)}

%in the vector case, \cite{LuoLi2021} propose to use  $r\approx p_1/5$. However, at the population level, Corollary~\ref{cor:cor_1} indicates that $r_k$ should indeed be rather large which is also visualized in See Figure~\ref{fig:fig_beta} which also shows in practice the choice might be guided by the difference between the signal and data dimension.

To cover a wide range of values, we consider in the simulation study for each mode the values $r_k\in\{1,5,10,25,50\}$.
For the estimation of the noise variance in the matrix case, \cite{RadojicicLietzenNordhausenVirta2021} considered different estimators with the ``largest'' being the median estimator which also turned out to give the best performance. To evaluate if this is the case in the current scenario as well, we consider the quantile-based estimators, $\hat{\sigma}_{k,q}^2$ (see Lemma~\ref{lemma:lemma_consistency_of_estimators}), for the quantiles $q\in\{0,0.1,\dots,0.6\}$.
As a competitor we use the bootstrap-based ladle estimator  described in  Algorithm~\ref{alg::boot}, with the number of bootstrap samples $s_k=200$ and where we ignored the if-condition $p_k \leq 10$ and always used $q_k = p_k-1$.
The results of the simulation study are presented in
Figure~\ref{fig:fig_simus}, where in each sub-figure, columns correspond to the row-dimensions $r_k$, $k=1,2,3$, of the augmentation matrices, while the rows specify the standardized noise variance.

\begin{figure}
    \centering
    \includegraphics[width=0.55\linewidth]{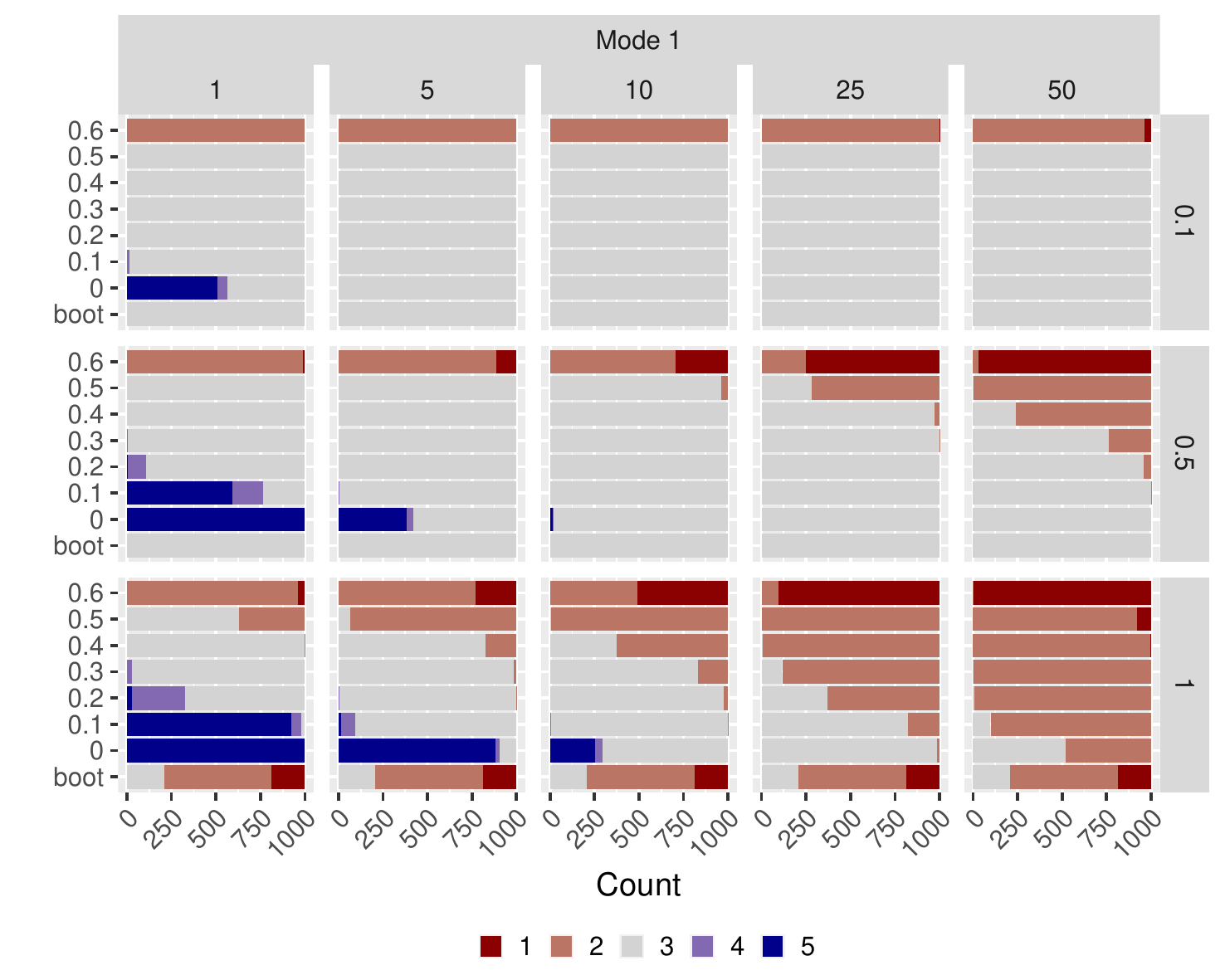}
    \includegraphics[width=0.55\linewidth]{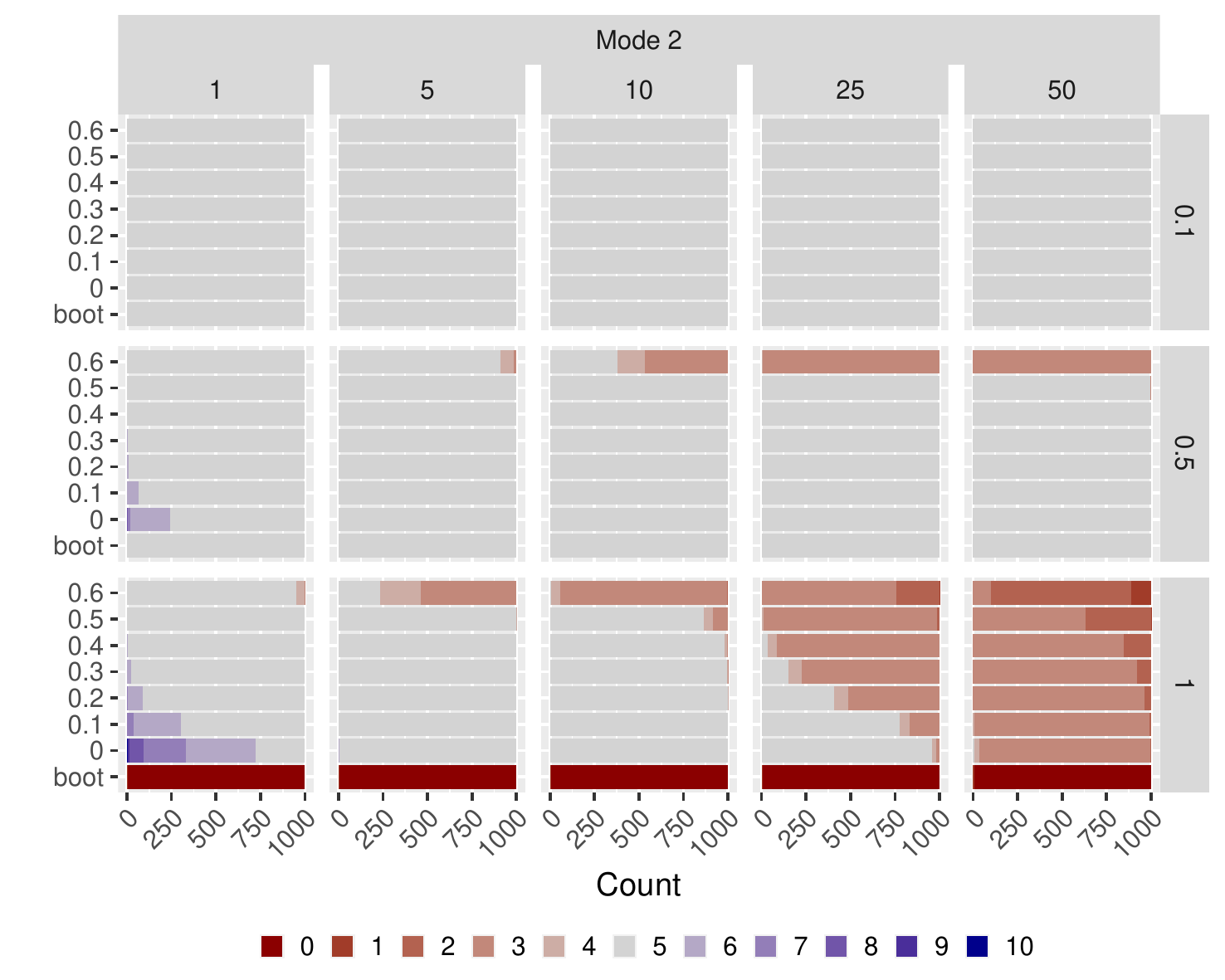}
    \includegraphics[width=0.55\linewidth]{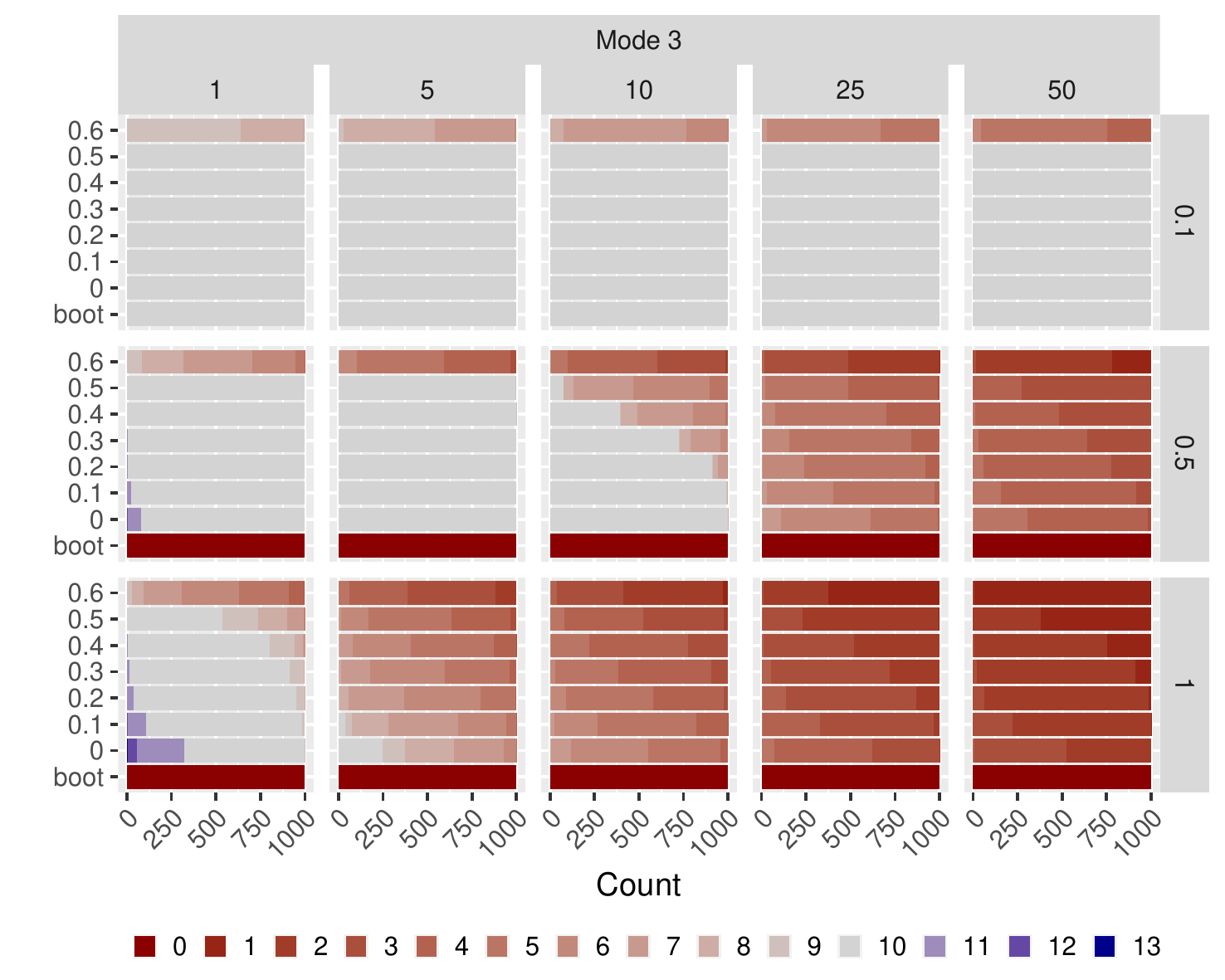}
    \caption{Frequencies of estimated latent dimensions in Model~\ref{eq:model_simus} based on 1000 repetitions. The true latent dimensions are $d_1=3$, $d_2=5$ and $d_3=10$ and are always marked as grey. Note that for ladle (method boot) the estimates are the same for all $r_k$.}
    \label{fig:fig_simus}
\end{figure}

The results show that, for the smallest value of the noise variance $\sigma^2 = 0.1$, the correct signal dimension is obtained in all modes and for all values of $r_k$ if the quantile index used for the noise variance estimation is not too extreme. However, when the SNR is decreased the number of augmented components $r_k$ and the quantile used for the noise variance estimation become of more relevance, especially for the modes with large values of $p_k$. Counter-intuitively to our theory, too large values of $r_k$ and noise variance lead to underestimation of the signal dimension. Recall, however, that in these simulations the SNR is in all cases quite low showing that the estimators perform all rather well. Nevertheless, we suggest not to use too drastic values for $r_k$, restricting to values such as $r_k=10$ or $r_k=25$ and using 0.2 or 0.3 as the quantile level for the noise variance estimation. The simulations also show that if $p_k$ is small and the SNR is not too low, the ladle works very well, but it also seems to suffer the most from deviations from these optimal settings and then always estimates the signal dimension as zero.

The disagreement between Corollary~\ref{cor:cor_1} and the simulation results with respect to increasing the number of augmented components especially when $p_k$ is large might be due to the fact that our asymptotic results are stated for fixed dimensionality and growing sample size $n$. Whereas, in practice, $n$ is fixed and, therefore, if $p_k + r_k$ is relatively large we might fall into the area of the Marchenko–Pastur law \cite{GotzeTikhomirov2004}. This will be investigated further in future research. %\joni{(Here ``Corollary 1'' should be changed to some other result?  As Corollary 1 is a population level result.)} \textcolor{blue}{I put Corollary 1 since we don't really have such a result on a sample level. But perhaps we could refer to Theorem 2.}

%\textcolor{blue}{Here add still explanation of the results - 1st mode - lowest SNR, 3rd mode - largest compression, that is why these are hard to deal with. Make at least an educated guess on what happens if we make r too large.}

\subsection{Example}\label{sec:real_data_example}
To illustrate the augmention estimator we considered 882
$224 \times 224$ RGB images of butterflies from various species, mostly taken in natural habitats. We assume that the SNR of the data is larger than in the previous simulation and therefore use $s_k=50$, $r_k=5$ and estimate the noise variance using the 30\% quantile.

Based on these values, Figure~\ref{fig:log_example} visualizes the
graph of our augmentation objective function $\hat{g}_k$ in \eqref{eq:phi_aug} for each mode on a logarithmic scale. The estimated signal dimension, obtained as the modewise minima, is (103, 111, 3). The fact that $d_3$ is estimated to be $3$ indicates that there is a lot of information in the data contained in all three colors, a fact that indeed is plausible, since  butterflies are greatly characterized by the color of their wings.
Figure~\ref{fig:example} illustrates five selected original (top) and reconstructed (bottom) images of butterflies and differences between the two are indeed only visible when zooming in.

%$224 \times 224$ - RGB images of butterflies originating from various species, mostly in their natural habitat. The images are treated as $3$rd order tensors, where the third mode corresponds to one of the three RGB colors. For the automated estimation of the latent dimensions we choose $\bar{\sigma}_{0.2}^2$ estimate of the noise variance, which equals the mean of $20$th percentile of the set of scaled pooled eigenvalues $\hat{S}$, as we have very little prior knowledge on the amount of the compression. To compensate for possible underestimation of the noise variance, we select $r_k=25$ for $k=1,2,3$; see Figure~\ref{fig:fig_simus} for more insight. To reduce the effect of the randomness introduced by the augmentation, we use $s=20$. Figure~\ref{fig:example} illustrates five selected original (up) and reconstructed (down) images of butterflies, where the dimensions of the core are taken to be (80,89,3), as suggested by the algorithm.

%Figure~\ref{fig:log_example} visualizes the
%graph of the objective function on a logarithmic scale, whose minimum is obtained at (80,89,3). The fact that $d_3$ is estimated to be $3$ indicated that there is a lot of information on the data contained in all three colors, a fact that indeed is plausible, since  butterflies are greatly characterized by the color of their wings.
\begin{figure}[ht]
\centering
\includegraphics[width = 0.6 \linewidth]{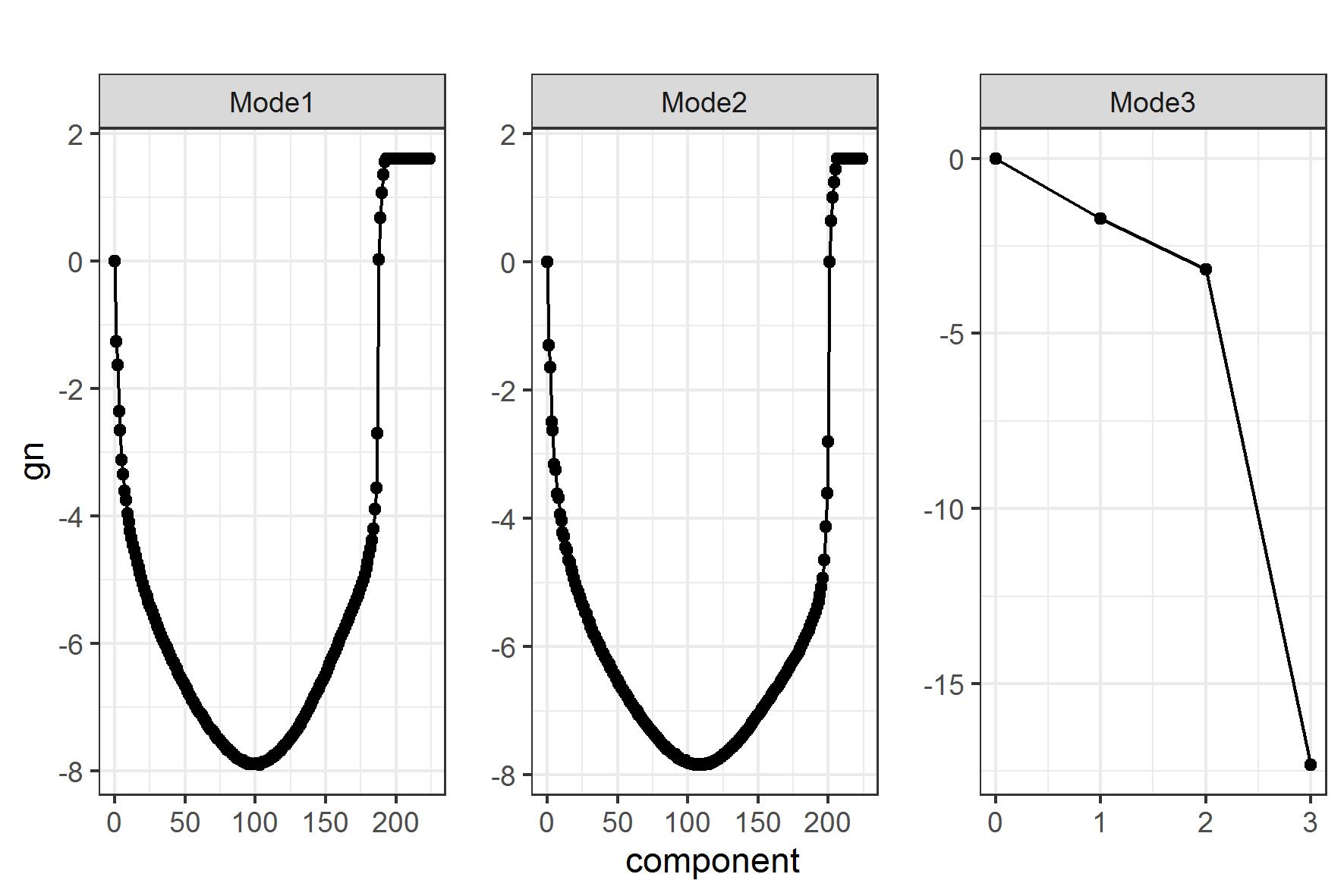}
\caption{Logarithmized objective function $\hat{g}_k$ for the augmentation estimator using $r_k = 25$, $s_k=20$,   $k=1,2,3$, calculated for the $\textit{butterfly}$ data set.}
\label{fig:log_example}
\end{figure}

\section{Discussion}\label{sec:discussion}

Data objects which have a natural representation as tensors, like images or video, are increasingly common and often the dimensions of these tensors are huge. In such cases it is often assumed that the data contain a lot of noise and that a representation using smaller tensors should be sufficient to capture the information content of the data. A difficult key question is then to decide on the dimensions of these smaller signal tensors. We considered this problem in the framework of a tensorial principal component analysis, also known as HOSVD, and suggested an automated procedure for the order determination by extending the works of \cite{LuoLi2021,RadojicicLietzenNordhausenVirta2021}. The properties of the novel estimator were rigorously derived and its usefulness was demonstrated using simulated and real data.

We also observed that, for finite $n$, increasing the number $r_k$ of augmented rows did not fully agree with the theory, which might be related to the Marchenko-Pastur law. This will be investigated further in future research by assuming a high-dimensional framework where we will also consider the use of different norms when computing the objective criterion. Furthermore, we plan to derive also a hypothesis test which would allow inference about the signal tensor dimension.

%\joni{M-P type phenomena when $r$ is too large. Testing under different nulls for the value of $d$. Asymptotic power 1 requires going to $r_n$-asymptotics.}

\bibliographystyle{myIEEEtran}
\bibliography{AugPCA}

\newpage

\appendix

\section{Tensor notation}\label{sec:tensor}

Let $\mathcal{A} = (a_{i_1, \dots, i_m})\in\mathbb{R}^{p_1\times\cdots\times p_m}$ be a tensor of order $m$. The $m$ ``directions''
from which we can look at $\mathcal{A}$ are called the modes of $\mathcal{A}$. As discussed, the number of elements in a high-order tensor is in general rather large, and it is therefore often useful to split a tensor into smaller pieces using $k$-mode vectors. For a fixed mode $k$, a single $k$-mode vector of a tensor $\mathcal{A}$ is obtained by letting the $k$th index of $a_{i_1, \dots, i_m}$ vary while simultaneously keeping the rest fixed. The total number of $k$-mode vectors is thus $\rho_k := \prod_{i\neq k}p_i$ and each of them is $p_k$-dimensional. The $p_k \times \rho_k$-dimensional matrix $\mathcal{A}_k$ having all $k$-mode vectors of $\mathcal{A}$ as its columns is known as the $k$-unfolding/flattening/matricization of the tensor $\mathcal{A}$.  The ordering of the $k$-mode vectors in $\mathcal{A}_k$  is for our purposes irrelevant as long as it is consistent, and we choose the cyclical
ordering as suggested in~\cite{Lathauwer2000}. For a matrix $\textbf{A}_m\in\mathbb{R}^{q_k\times p_k}$, the $k$-mode multiplication $\mathcal{A}\times_m\textbf{A}_m$ of a tensor $\mathcal{A}$ by the matrix $\textbf{A}_m$ is defined as the tensor of order $p_1\times\cdots\times p_{k-1}\times q_k\times p_{k+1}\times\cdots\times p_m$ obtained by pre-multiplying each $k$-mode vector of $\mathcal{A}$ by $\textbf{A}_k$. Often, linear transformations are applied simultaneously from all $m$ modes and we therefore use the notation $\mathcal{A}\times_{i=1}^m\textbf{A}_i:=\mathcal{A}\times_1\textbf{A}_1\times_2\cdots\times_m\textbf{A}_m$. The flattenings of such linear transformation have a particularly nice form. Namely, the $k$-flattening of $\mathcal{A}\times_{i=1}^m\textbf{A}_i$ is $\textbf{A}_k \mathcal{A}_k (\textbf{A}_{k+1}\otimes\cdots\otimes\textbf{A}_m\otimes\textbf{A}_1\otimes\cdots\otimes\textbf{A}_{k-1})'$.
Relevant to our purposes is also the Frobenius norm $\|\mathcal{A}\|_\mathrm{F}$ of  a tensor $\mathcal{A}$, defined as the square root of the sum of squared elements of $\mathcal{A}$.  For more details on tensor algebra and the corresponding notation see e.g.~\cite{Lathauwer2000,Kolda2009}.
%One such procedure is %equivalent to splitting a matrix into collection of its rows and columns.

\section{Proofs of technical results}\label{sec:proofs}

\begin{proof}[Proof of Lemma \ref{lemma:lemma1}]
$\mathcal{E}$ has spherical distribution, implying that, for any orthogonal matrices $\textbf{V}_i\in\mathbb{R}^{p_i\times p_i}$, $i=1,\dots,m$, $\mathcal{E}\times_{i=1}^m\textbf{V}_i\sim \mathcal{E}$. Then
$$
\textbf{V}_k\mathcal{E}_k(\textbf{V}_{k+1}\otimes\textbf{V}_{k+2}\otimes\cdots\otimes\textbf{V}_m\otimes\textbf{V}_1\otimes\textbf{V}_2\otimes\cdots\otimes\textbf{V}_{k-1})'\sim\mathcal{E}_k,
$$
for all orthogonal matrices $\textbf{V}_i$. Since the Kronecker product of orthogonal matrices is again an orthogonal matrix \cite{laub2005}, then $(\textbf{V}_{k+1}\otimes\textbf{V}_{k+2}\otimes\cdots\otimes\textbf{V}_m\otimes\textbf{V}_1\otimes\textbf{V}_2\otimes\cdots\otimes\textbf{V}_{k-1})$ is a $(\prod_{i\neq k}p_i\times\prod_{i\neq k}p_i)$ orthogonal matrix. Thus, $\mathbb{E}(\mathcal{E}_k\mathcal{E}_k')=\textbf{V}_k\mathbb{E}(\mathcal{E}_k\mathcal{E}_k')\textbf{V}_k'$, for all $(p_k\times p_k)$ orthogonal matrices $\textbf{V}_k$, further implying that $\mathbb{E}(\mathcal{E}_k\mathcal{E}_k')=\sigma_k^2\textbf{I}_{p_k}$ for some $\sigma_k^2>0$.
\end{proof}

\begin{proof}[Proof of Theorem \ref{thm:thm0}]
Let $\boldsymbol{\beta}_{}^*=(\boldsymbol{\beta}_{},\boldsymbol{\beta}_{S}) = \textbf{B}^*_{k,0}\textbf{a} \in\mathbb{R}^{p_k+r_k}$, where $\textbf{a}=(\textbf{a}_1',\textbf{a}_2')'$ follows a uniform distribution on the unit sphere in $\mathbb{R}^{p_k-d_k+r_k}$ and $\textbf{a}_1\in\mathbb{R}^{p_k-d_k}$, $\textbf{a}_2\in\mathbb{R}^{r_k}$.

Due to the sphericality of $\textbf{a}$, we can take $\textbf{B}_{k,0}^*=\mathrm{diag}([\textbf{u}_{d_k+1},\dots,\textbf{u}_{p_k}],\textbf{I}_{r_k})$, where  $\{\textbf{u}_{d_k+1},\dots,\textbf{u}_{p_k}\}$ is a fixed orthonormal basis of the null space of $\textbf{U}_k$. Thus, the norm of the augmented part is $\|\boldsymbol\beta_S\|=\|\textbf{a}_2\|$. This now shows that $\|\boldsymbol\beta_{S}\|^2$, as the squared norm of the $r_k$-dimensional sub-vector of the random vector $\textbf{a}$ with a uniform distribution on the unit sphere in $\mathbb{R}^{p_k-d_k+r_k}$, has the $\mathrm{Beta}(r_k/2,(p_k-d_k)/2)$-distribution.
\end{proof}

In the following proofs we adopt the notation that  $\mathcal{X}^1,\dots,\mathcal{X}^n$ is an i.i.d. sample from Model~\ref{def:tensor_model}. For simplicity of notation, we assume without loss of generality that the mean $\mathcal{M}$ of $\mathcal{X}$ is a zero-tensor. Before giving proofs of Corollary~\ref{cor:cor_2} and Lemma~\ref{lemma:lemma_consistency_of_estimators}, we present first an auxiliary result.

\begin{lemma}\label{lemma:lemma5}
Let $\textbf{M}_k^*$ and $\hat{\textbf{M}}_k^*$ be as defined in the main text. Then,  $\|\hat{\textbf{M}}_k^*-\textbf{M}_k^*\|\to_P 0$.
\end{lemma}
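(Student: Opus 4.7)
The plan is to exploit the block structure of the augmented matrix and reduce the claim to three separate convergences, each handled by the law of large numbers (LLN) together with Slutsky's lemma. The hypothesis embedded in the definition of $\hat{\textbf{M}}_k^*$ that $\hat\sigma_k^2$ is a consistent estimator of $\sigma_k^2$ is used throughout.

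First, I would partition both $\textbf{M}_k^*$ and $\hat{\textbf{M}}_k^*$ into four blocks matching the $p_k + r_k$ augmentation split. The target matrix $\textbf{M}_k^*$ has the block-diagonal form with blocks $\textbf{U}_k \mathbb{E}(\mathcal{Z}_k\mathcal{Z}_k')\textbf{U}_k'$ and $\textbf{0}_{r_k\times r_k}$, and zero off-diagonal blocks. The top-left $p_k\times p_k$ block of $\hat{\textbf{M}}_k^*$ equals $\hat{\textbf{M}}_k - \hat\sigma_k^2 \textbf{I}_{p_k}$, where $\hat{\textbf{M}}_k$ is the ordinary sample covariance of the centered $k$-flattenings $\mathcal{X}_k^i - \bar{\mathcal{X}}_k$. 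By the LLN, justified by the finite second moment assumption $\mathbb{E}\|\mathcal{X}\|_F^2 < \infty$ in Model \eqref{def:tensor_model}, $\hat{\textbf{M}}_k \to_P \textbf{U}_k \mathbb{E}(\mathcal{Z}_k\mathcal{Z}_k')\textbf{U}_k' + \sigma_k^2 \textbf{I}_{p_k}$, and combining with $\hat\sigma_k^2 \to_P \sigma_k^2$ shows that the top-left block tends in probability to $\textbf{U}_k \mathbb{E}(\mathcal{Z}_k\mathcal{Z}_k')\textbf{U}_k'$, matching the corresponding block of $\textbf{M}_k^*$.

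Next, the off-diagonal $p_k \times r_k$ block of $\hat{\textbf{M}}_k^*$ equals $\hat\sigma_k$ times the sample cross-covariance of $\mathcal{X}_k^i$ and $\textbf{X}_{i,S}$. Since the auxiliary matrices $\textbf{X}_{i,S}$ are drawn independently of the observations $\mathcal{X}^i$ and are mean-zero, the population cross-covariance vanishes, so by the LLN the sample cross-covariance is $o_P(1)$; Slutsky's lemma (with $\hat\sigma_k = O_P(1)$) then absorbs the multiplicative factor and delivers convergence of this block to $\textbf{0}$. Similarly, the bottom-right $r_k \times r_k$ block equals $\hat\sigma_k^2$ times the centered sample covariance of the $\textbf{X}_{i,S}$, minus $\hat\sigma_k^2 \textbf{I}_{r_k}$. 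The LLN forces the sample covariance of the i.i.d.\ $\textbf{X}_{i,S}$ to converge in probability to its expectation $\textbf{I}_{r_k}$ (under the scaling conventions of the augmentation construction), so Slutsky gives that this block is $\hat\sigma_k^2(\textbf{I}_{r_k}+o_P(1))-\hat\sigma_k^2\textbf{I}_{r_k} = o_P(1)$.

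Finally, block-wise convergence in probability transfers to convergence of the whole matrix in any sub-multiplicative matrix norm, e.g.\ the operator norm, which is dominated by the Frobenius norm and in turn by the sum of the Frobenius norms of the four blocks. The only non-mechanical step is exploiting the independence between the two sources of randomness, namely the data $\{\mathcal{X}^i\}$ and the augmentation $\{\textbf{X}_{i,S}\}$, when handling the off-diagonal block; everything else is routine LLN and Slutsky bookkeeping. One point worth noting for presentation is that the consistency of the quantile-based estimators of $\sigma_k^2$ listed in Lemma \ref{lemma:lemma_consistency_of_estimators} is proved downstream via Corollary \ref{cor:cor_2}, which relies on the present lemma; internally here, however, the consistency $\hat\sigma_k^2 \to_P \sigma_k^2$ is simply part of the data provided by the statement, so no circularity arises.
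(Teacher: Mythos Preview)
Your proposal is correct and follows essentially the same approach as the paper: block-decompose $\hat{\textbf{M}}_k^* - \textbf{M}_k^*$, apply the law of large numbers to each block together with the assumed consistency $\hat\sigma_k^2 \to_P \sigma_k^2$ (and Slutsky for the multiplicative $\hat\sigma_k$ factors), then pass to the norm by continuity. Your remark about the apparent circularity with Lemma~\ref{lemma:lemma_consistency_of_estimators} is a useful clarification that the paper leaves implicit.
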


\begin{proof}[Proof of Lemma \ref{lemma:lemma5}]
It is straightforward to verify that
\begin{align*}
    \textbf{M}_k^* &= \begin{pmatrix}
\textbf{U}_k\mathbb{E}(\mathcal{Z}_k\mathcal{Z}_k')\textbf{U}_k' &\textbf{0}\\
\textbf{0} &\textbf{0}
\end{pmatrix}, \\
\hat{\textbf{M}}_k^* &= \frac{1}{n}\sum_{i=1}^n\begin{pmatrix}
\mathcal{X}_{k,i}\mathcal{X}_{k,i}' &\hat\sigma_k\mathcal{X}_{k,i}\textbf{X}_{S,i}'\\
\hat\sigma_k\textbf{X}_{S,i}\mathcal{X}_{k, i}' & \hat\sigma_k^2\textbf{X}_{S,i}\textbf{X}_{S,i}'
\end{pmatrix}-\hat\sigma_k^2\textbf{I}_{p_k+r_k}.
\end{align*}
Observe now the first diagonal block of $\hat{\textbf{M}}_k^*$. Due to WLLN, continuous mapping theorem and the fact that $\hat\sigma_k$ is a consistent estimator of $\sigma_k$, we have $\frac{1}{n}\sum_{i=1}^n\mathcal{X}_{k,i}\mathcal{X}_{k,i}'-\hat\sigma_k^2\textbf{I}_{p_k}\to_P \mathbb{E}(\mathcal{X}_k\mathcal{X}_k')-\sigma_k^2\textbf{I}_{p_k}=\textbf{U}_k\mathbb{E}(\mathcal{Z}_k\mathcal{Z}_k')\textbf{U}_k'$. The convergence of the second diagonal block is proved similarly. Furthermore, by assumption,
$\hat\sigma_k^2\to_{P}\sigma_k^2$ and $\frac{1}{n}\sum_{i=1}^n\textbf{X}_{S,i}\mathcal{X}_{k,i}'\to_{P}\mathbb{E}(\textbf{X}_S \mathcal{X}_k')=\textbf{0}$, implying the convergence of the off-diagonal blocks. Finally, since matrix norms are continuous functions, we obtain that $\|\hat{\textbf{M}}_k^*-\textbf{M}_k^*\|\to_P 0$.
\end{proof}

\begin{proof}[Proof of Corollary~\ref{cor:cor_2}]
$\hat{\textbf{M}}_k^*$, $\textbf{M}_k^*$  and their difference $\textbf{R}_{k,n}=\hat{\textbf{M}}_k^*-\textbf{M}_k^*$ are symmetric matrices. Weyl's theorem then implies that, for any $i\in \{1,\dots , p_k+r_k\}$, we have $0\leq |\hat\lambda_{k,i}-\lambda_{k,i}|\leq \|\textbf{R}_{k,n}\|_2=o_P(1)$, where $\|\textbf{R}_{k,n}\|_2=o_P(1)$ holds due to Lemma~\ref{lemma:lemma5}.
\end{proof}

\begin{proof}[Proof of Lemma \ref{lemma:lemma_consistency_of_estimators}]
For simplicity of notation, we do not write the mean terms $\mathcal{M}_k$ and $\bar{\mathcal{X}}_k$ below. Mimicking the proof of Lemma~\ref{lemma:lemma5} it is easy to show that
$\left\|\frac{1}{n}\sum_{i=1}^n(\mathcal{X}_{k,i}{\mathcal{X}_{k,i}}')-\mathbb{E}\left(\mathcal{X}_{k,i}\mathcal{X}_{k,i}'\right)\right\|^2=o_P(1)$, for $k=1,\dots,m$. Mimicking further the proof of Corollary~\ref{cor:cor_2} for $\mathbb{E}(\mathcal{X}_{k,i}\mathcal{X}_{k,i}')$ one can show that the eigenvalues satisfy $\hat\sigma_{k,i}^2=\sigma_{k,i}^2+o_P(1)$, $k=1,\dots,m$. We then have that $\mathbb{E}(\mathcal{X}_k\mathcal{X}_k')=\textbf{U}_k'\mathbb{E}(\mathcal{Z}_k\mathcal{Z}_k')\textbf{U}_k+\sigma_k^2\textbf{I}_{p_k}$ implying that $\sigma_{k,i}^2=\lambda_{k,i}+\sigma_k^2$, $k=1,\dots,m$. Moreover,
\begin{align*}
\hat S_k =\{\frac{p_i}{p_k}\hat\sigma_{i,j_i}^2:i=1,\dots,m,\,j_i=1,\dots,d_i\}\cup\hat{S}_{0k},
\end{align*}
where
\begin{align*}
\hat S_{0k}=\{\frac{p_i}{p_k}\hat\sigma_{i,j_i}^2:i=1,\dots,m,\,j_i=d_i+1,\dots,p_i\}
\end{align*}
and each element of $\hat S_{0k}$ is due to the upper discussion of form $\sigma_k^2+o_P(1)$.
\begin{itemize}
    \item[i)] If $d_1+\dots+d_m<(1-q)(p_1+\dots+p_m)$, then both estimators $\hat\sigma^2_{k,q_1}$ and $\bar\sigma_{k,q_1}^2$, for $q_1\leq q$ belong to $\hat S_{0k}$, which proves the first statement.
    \item[ii)] If $d_1+\cdots+d_m<p_1+\cdots+p_m$ then the estimator $\min\{\hat S_k\}$ belongs to $\hat S_{0k}$, proving the second statement.
\end{itemize}
\end{proof}
%, where $\hat\sigma_i^2$, $(\hat\sigma_i')^2$, $i=1,\dots,p_1$ and $\sigma_i^2$, $(\sigma_i')^2$, $i=1,\dots,p_2$  are eigenvalues of $\rm{Cov}(\textbf{X})$

In the following, we let  $\hat\sigma_k^2=\hat{\sigma}_k^2(\mathcal{X}^1,\dots,\mathcal{X}^n)$ be a consistent estimator of the error variance in the $k$th mode.

\begin{proof}[Proof of Theorem \ref{thm:consistency_eigenvectors}]
Lemma~\ref{lemma:lemma5} gives that $\|\hat{\textbf{M}}_k^*-\textbf{M}_k^*\|=o_P(1)$, thus implying that each element of the difference of those two matrices is $o_P(1)$. For simplicity, we write $\hat{\textbf{M}}_k^*-\textbf{M}_k^*=o_P(1)$, where the convergence is element-wise in probability. Lemma~\ref{lemma:lemma5} and Corollary~\ref{cor:cor_2} now imply
\begin{align}\label{eq:eq_1}
\sum_{i=1}^{d_k}{\lambda}_{k,i}^2\hat{\boldsymbol{\beta}}_{k,i}^*{\hat{\boldsymbol{\beta}}_{k,i}^*}{}'-\sum_{i=1}^{d_k}{\lambda}_{k,i}^2{\boldsymbol{\beta}}_{k,i}^*{\boldsymbol{\beta}_{k,i}^*}'=o_P(1).
\end{align}
If we multiply~\eqref{eq:eq_1} by $\hat{\boldsymbol{\beta}}_{k,1}^*$ from the right-hand side, we obtain
$$
\hat{\boldsymbol{\beta}}_{k,1,S}=\frac{1}{\lambda_{k,1}^2}\sum_{i=1}^{d_k}{\lambda}_{k,i}^2({\boldsymbol{\beta}_{k,i}^*}'\hat{\boldsymbol{\beta}}_{k,1}^*){\boldsymbol{\beta}}_{k,i,S}+o_P(1)
=o_P(1),
$$
where the last equality holds due to $\boldsymbol\beta_{k,i,S}=\textbf{0}$ as shown in Section~\ref{sec:augmentation}. We repeat the same procedure for all $\hat{\boldsymbol{\beta}}_{k,i,S}$, $i=1,\dots,d_k$, thus proving statement (i).

We next prove statement (ii). If we further multiply~\eqref{eq:eq_1} from both sides by any  $\boldsymbol{\beta}_k^*\in\mathrm{Ker}(\textbf{M}_k^*)$, we
obtain $\sum_{i=1}^{d_k}\lambda_{k,i}^2({\hat{\boldsymbol\beta}_{k,i}^*}{}'\boldsymbol\beta_k^*)^2=o_P(1)$, thus implying that
\begin{equation}\label{eq:alpha_o_p(1)}
{\hat{\boldsymbol\beta}_{k,i}^*}{}'\boldsymbol\beta_{k,j}^*=o_P(1),\,\, i=1,\dots,d_k,\,\, j=d_k+1,\dots,p_k + r_k.
\end{equation}
%We now
% represent $\hat{\boldsymbol{\beta}}_{k,i}^*$, $i>d_k$ in the $\{
% \boldsymbol{\beta}_{k,i}^*:i=1,\dots,n_k=p_k+r_k\}$ basis. More precisely,
% $$
% \hat{\boldsymbol{\beta}}_{k,d_k+1}^*=\sum_{i=1}^{n_k}\alpha_i\boldsymbol{\beta}_{k,i}^*,\quad\alpha_i\propto{\boldsymbol{\beta}_{k,i}^*}'\hat{\boldsymbol{\beta}}_{k,d_k+1}^*,\quad \sum_{i=1}^{n_k}\alpha_i^2=1,
% $$
% where due to~\eqref{eq:alpha_o_p(1)} implies $\alpha_i=o_P(1)$, for $i\leq d_k$.

 We let $\hat{\textbf{M}}^*_{k,0}$ denote the matrix that is otherwise exactly as  $\hat{\textbf{M}}^*_{k}$ but with every instance of $\hat{\sigma}^2_k$ replaced with $\sigma^2_k$ and with the sample means $\bar{\textbf{X}}^*_k$ replaced with zero matrices. That is, in $\hat{\textbf{M}}^*_{k,0}$ the augmented part is scaled with the true value $\sigma_k$ instead of the estimator $\hat{\sigma}_k$ and we ``center'' using the population-level matrix $\sigma_k^2 \textbf{I}_{p_k + r_k}$. By Lemma \ref{lemma:lemma_consistency_of_estimators} and the law of large numbers, we then have $ \hat{\textbf{M}}^*_{k} = \hat{\textbf{M}}^*_{k,0} + o_p(1) $.

Define then the  following $(p_k + r_k) \times (p_k + r_k)$ orthogonal matrix,
\begin{align*}
    \textbf{W} := \begin{pmatrix}
    \textbf{U}_k & \textbf{U}_0 & \textbf{0} \\
    \textbf{0} & \textbf{0} & \textbf{I}_{r}
    \end{pmatrix},
\end{align*}
where the $p_k \times d_k$ matrix $\textbf{U}_k$ is as in \eqref{def:matrix_k_flattening_model}, and the $p_k \times (p_k - d_k)$ matrix $\textbf{U}_0$ is an arbitrary matrix that makes $(\textbf{U}_k, \textbf{U}_0)$ orthogonal. Block-wise multiplication now shows that $\textbf{W}' ((\mathcal{X}_{k,i})',\sigma_k\textbf{X}_{S,i}')'$ equals
\begin{align}\label{eq:principal_component_decomposition}
    \begin{pmatrix}
    \mathcal{Z}_{k,i} (\textbf{U}_{-k}^\otimes)' + \textbf{U}_k' \mathcal{E}_{k,i} \\
    \textbf{U}_0' \mathcal{E}_{k,i} \\
    \sigma_k \textbf{X}_{S,i}
    \end{pmatrix}.
\end{align}
The final $p_k - d_k + r_k$ rows of the matrix in \eqref{eq:principal_component_decomposition} consist solely of mutually independent elements from the distribution $\mathcal{N}(0, \sigma^2_k)$. As these rows are further independent from the first $d_k$ rows, the distribution of the matrix \eqref{eq:principal_component_decomposition} is invariant to multiplication from the left with matrices of the form $\mathrm{diag}(\textbf{I}_{d_k}, \textbf{V}')$ where $\textbf{V}$ is an arbitrary $(p_k - d_k + r_k) \times (p_k - d_k + r_k)$ orthogonal matrix. Consequently, we also have the distributional invariance,
\begin{align}\label{eq:distribution_invariance}
    \textbf{W}' \hat{\textbf{M}}^*_{k,0} \textbf{W} \sim \mathrm{diag}(\textbf{I}_{d_k}, \textbf{V}') \textbf{W}' \hat{\textbf{M}}^*_{k,0} \textbf{W} \mathrm{diag}(\textbf{I}_{d_k}, \textbf{V}),
\end{align}
for any orthogonal $\textbf{V}$. As the distribution of $\mathcal{E}_{k,i}$ is absolutely continuous, we have, for large enough $n$, that the eigenvalues of $\hat{\textbf{M}}^*_{k,0}$ are almost surely simple. Without loss of generality, we next restrict solely to this event. Consequently, the corresponding eigenvectors $\hat{\boldsymbol{\gamma}}_{k,j}^*$, $j = 1, \ldots, p_k + r_k$ are uniquely defined up to their sign, which we choose arbitrarily. Next, for a symmetric random matrix $\textbf{A}$ with simple eigenvalues, we let $\boldsymbol{\gamma}_j(\textbf{A})$ denote its $j$th eigenvector multiplied with a uniformly random sign (the sign multiplication guarantees that the distribution of $\boldsymbol{\gamma}_j(\textbf{A})$ is well-defined even though the eigenvectors themselves are unique only up to sign.

Applying the map $\textbf{A} \mapsto \boldsymbol{\gamma}_k(\textbf{A})$ to relation \eqref{eq:distribution_invariance} gives that
\begin{align*}
    s_1 \textbf{W}' \hat{\boldsymbol{\gamma}}_{k,j}^* \sim s_2%oh
    \mathrm{diag}(\textbf{I}_{d_k}, \textbf{V}') \textbf{W}' \hat{\boldsymbol{\gamma}}_{k,j}^*,
\end{align*}
where $s_1, s_2$ are uniformly random signs. Thus, the sub-vector consisting of the final $p_k - d_k + r_k$ elements of $\textbf{W}' \hat{\boldsymbol{\gamma}}_{k,j}^*$ is orthogonally invariant in distribution. Using the decomposition $ \hat{\boldsymbol{\gamma}}_{k,j}^* = (\hat{\boldsymbol{\gamma}}_{k,j,1}, \hat{\boldsymbol{\gamma}}_{k,j,S}) $, when $j > d_k$ this sub-vector writes $ \hat{\textbf{m}} := (\textbf{U}_0' \hat{\boldsymbol{\gamma}}_{k,j,1}, \hat{\boldsymbol{\gamma}}_{k,j,S})$. By mimicking \eqref{eq:alpha_o_p(1)}, we see that $ \textbf{U}_k' \hat{\boldsymbol{\gamma}}_{k,j,1} = o_p(1) $, and then the unit length of $\hat{\boldsymbol{\gamma}}_{k,j}^*$ guarantees that $\| \hat{\textbf{m}} \| = 1 + o_p(1)$. By writing
\begin{align*}
    \hat{\textbf{m}} = \hat{\textbf{m}}/\| \hat{\textbf{m}} \| + (1 - 1/\|\hat{\textbf{m}}\|) \hat{\textbf{m}} = \hat{\textbf{m}}/\| \hat{\textbf{m}} \| + o_p(1),
\end{align*}
we see that the limiting distribution of $\hat{\textbf{m}}$ is the uniform distribution on the unit sphere in $\mathbb{R}^{p_k - d_k + r_k}$. As such, the limiting distribution of $\| \hat{\boldsymbol{\gamma}}_{k,j,S} \|^2$ is $\mathrm{Beta}\{ r_k/2, (p_k - d_k)/2 \}$. As $ \hat{\textbf{M}}^*_{k} = \hat{\textbf{M}}^*_{k,0} + o_p(1) $ and as the extraction of the $j$th eigenvector is (up to sign) a continuous mapping when the $j$th eigenvalue is simple, the same limiting distribution is valid also for $\| \hat{\boldsymbol\beta}_{k,j,S} \|^2$, concluding the proof.
\end{proof}

\begin{proof}[Proof of Corollary~\ref{cor:cor_3}]
Let $F_n$ and $F$ be the CDFs of $\|\hat{\boldsymbol{\beta}}_{k,i,S}\|^2$ and $\mathrm{Beta}\{ r_k/2, (p_k - d_k)/2 \}$, respectively, and observe that $F$ is continuous at $0$ (from the right). Furthermore, due to Theorem~\ref{thm:consistency_eigenvectors}, $\|\hat{\boldsymbol{\beta}}_{k,i,S}\|^2\rightsquigarrow \mathrm{Beta}\{ r_k/2, (p_k - d_k)/2 \}$, $i > d_k$.

Observe then that, since $F_n$ is a CDF, it is a bounded and increasing function. Thus, for fixed $n\in\mathbb{N}$, $\lim_{\varepsilon\to 0}F_n(\varepsilon)$ exists. Furthermore, as $\|\hat{\boldsymbol\beta}_{k,i,S}\|^2$ converges weakly to $\mathrm{Beta}\{ r_k/2, (p_k - d_k)/2 \}$, and $F$ is continuous on $[0,1]$, then $F_n$ converges uniformly to $F$ on $[0,1]$. Therefore, due to Moore-Osgood theorem the following limits interchange:
$$
\lim_{n\to\infty}\lim_{\varepsilon\to 0}F_n(\varepsilon)=\lim_{\varepsilon\to 0}\lim_{n\to\infty}F_n(\varepsilon)=\lim_{\varepsilon\to 0}F(\varepsilon)=0,
$$
where the last equality holds since $F$ is continuous at $0$ from the right as the CFD of the $\mathrm{Beta}\{ r_k/2, (p_k - d_k)/2 \}$-distribution.
\end{proof}

\begin{proof}[Proof of Theorem \ref{thm:consistency_of_d}]
We give a proof for $s=1$. The general case is proven using the same technique.

For $i<d_k$, the denominator of $\hat{\Phi}_k$ satisfies
\begin{align*}
1+\sum_{j=1}^i\hat\lambda_{k,j}\to_P 1+\sum_{j=1}^i\lambda_{k,i}>1.
\end{align*}
The continuous mapping theorem ($x\mapsto 1/x$ is continuous on $[1,\infty)$) then implies that
\begin{align*}
(1+\sum_{j=1}^i\hat\lambda_{k,j})^{-1}\to_P (1+\sum_{j=1}^i\lambda_{k,j})^{-1}\in (0,1).
\end{align*}
{Since $\hat\lambda_{k, i+1}\to_P\lambda_{k,i+1}>0$, then $\hat{\Phi}_k(i)%\geq\hat\lambda_{k,i+1}/(1+\sum_{j=1}^i\hat\lambda_{k,j})
\to_P \lambda_{k,i+1}/(1+\sum_{j=1}^{i+1}\lambda_{k,j})>0$.}

%\joni{Joni's comment: Should we have above instead the opposite inequality $\hat{\Phi}_k(i)\leq\hat\lambda_{k,i+1}/(1+\sum_{j=1}^i\hat\lambda_{k,j})$ which is in the "wrong" direction? I no longer remember why did I put there $\geq$, but it should be $=$, right? Which make the statement follow from your comment below. I think it was a typo. Also, does not the desired result follow simply by observing that
%\begin{align*}
%    \hat{\Phi}_k(l) \rightarrow_P %\lambda_{k,l+1}/\left(\sum_{i=1}^{l+1}\lambda_{k,i%}+1\right) > 0?
%\end{align*}}

This shows then that $\hat{g}_k(i)$, $i < d_k$, is bounded from below by a quantity which converges in probability to something strictly greater than zero.

For $i>d_k$, statement $(ii)$ of Corollary~\ref{cor:cor_3} gives that for any sequence $\varepsilon_n>0$ such that $\varepsilon_n \rightarrow 0$, we have  $\mathbb{P}(\|\hat{\boldsymbol\beta}_{k,i,S}\|^2>\varepsilon_n)\to 1$, as $n\to\infty$.  Thus, $\mathbb{P}(\hat{g}_k(i)>\varepsilon_n)\geq \mathbb{P}(\|\hat{\boldsymbol\beta}_{k,i,S}\|^2>\varepsilon_n)\to 1$, as $n\to\infty$.

For $i=d_k$, by Theorem~\ref{thm:consistency_eigenvectors} we have $\sum_{j=1}^{d_k}\|\hat{\boldsymbol\beta}_{k,j,S}\|^2=o_P(1)$, and by Corollary~\ref{cor:cor_2}, $\hat\lambda_{k,d_k+1}=\lambda_{k,d_k+1}+o_P(1)=o_P(1),$ implying that $\hat{\Phi}_k(d_k) \rightarrow_p 0 $. Thus $\hat{g}_k(d_k) = o_p(1) $.

The above behaviours of $g_k$ in the three different cases $i < d_k$, $i = d_k$, $i > d_k$ now together give the desired claim.

%\joni{Joni's comment: I modified the above a bit such that it makes explicit statements for the objective function $\hat{g}_k$. Could you please check that it remained correct? Checked!}
\end{proof}

\end{document}